\tikzstyle{process} = [rectangle, minimum width=6cm, minimum height=1cm, text centered, text width=7cm, draw=black, fill=orange!30]
\DeclareMathOperator*{\argmax}{arg\,max}
\DeclareMathOperator*{\argmin}{arg\, min}
\newenvironment{breakablealgorithm}
  {
   \begin{center}
     \refstepcounter{algorithm}
     \hrule height.8pt depth0pt \kern2pt
     \renewcommand{\caption}[2][\relax]{
       {\raggedright\textbf{\ALG@name~\thealgorithm} ##2\par}%
       \ifx\relax##1\relax 
         \addcontentsline{loa}{algorithm}{\protect\numberline{\thealgorithm}##2}%
       \else 
         \addcontentsline{loa}{algorithm}{\protect\numberline{\thealgorithm}##1}%
       \fi
       \kern2pt\hrule\kern2pt
     }
  }{
     \kern2pt\hrule\relax
   \end{center}
  }
\theoremstyle{thmstyleone}%
\newtheorem{theorem}{Theorem}
\newtheorem{fact}[theorem]{Fact}
\newtheorem{lemma}[theorem]{Lemma}
\theoremstyle{thmstyletwo}%
\theoremstyle{thmstylethree}%
\begin{document}

\title{Quantum Algorithm for Apprenticeship Learning}


\author*[1]{\fnm{Andris} \sur{Ambainis}}\email{andris.ambainis@lu.lv}

\author*[1]{\fnm{Debbie} \sur{Lim}}\email{limhueychih@gmail.com}
\equalcont{These authors contributed equally to this work.}

\affil[1]{Center for Quantum Computing Science, Faculty of Sciences and Technology, University of Latvia, Latvia}


\abstract{Apprenticeship learning is a method commonly used to train artificial intelligence systems to perform tasks that are challenging to specify directly using traditional methods. Based on the work of Abbeel and Ng (ICML'04), we present a quantum algorithm for apprenticeship learning via inverse reinforcement learning. As an intermediate step, we give a classical approximate apprenticeship learning algorithm to demonstrate the speedup obtained by our quantum algorithm. We prove convergence guarantees on our classical approximate apprenticeship learning algorithm, which also extends to our quantum apprenticeship learning algorithm. We also show that, as compared to its classical counterpart, our quantum algorithm achieves an improvement in the per-iteration time complexity by a quadratic factor in the dimension of the feature vectors $k$ and the size of the action space $A$.}

\keywords{Quantum algorithms, apprenticeship learning, inverse reinforcement learning}



\maketitle

\section{Introduction}
Machine learning has emerged as a pivotal discipline, focusing on the development of algorithms and models that enable computers to learn from data and make decisions or predictions without explicit programming. This field encompasses a spectrum of approaches, such as supervised learning, unsupervised learning, and reinforcement learning, each suited to different learning scenarios and objectives~\citep{qiu2016survey,meng2020survey,fatima2017survey,muhammad2015supervised,jordan2015machine,baltruvsaitis2018multimodal,sen2020supervised,boutaba2018comprehensive,10.1145/3214306}. Apprenticeship learning is associated with the concept of learning by observing and imitating an expert. It consists of a range of techniques aimed at enabling autonomous agents to acquire skills and knowledge through interaction with a demonstrator or mentor. Unlike traditional supervised learning approaches, apprenticeship learning emphasizes learning from demonstration, where an agent seeks to replicate experts' behaviors by observing examples provided by a knowledgeable source. This learning framework has gained traction in various domains, including robotics and autonomous driving due to its ability to handle complex, real-world tasks and environments~\citep{abbeel2004apprenticeship, abbeel2008apprenticeship, coates2009apprenticeship, abbeel2005exploration, kolter2007hierarchical}. By leveraging apprenticeship learning, autonomous systems can effectively acquire and refine their capabilities, paving the way for more adaptive and versatile intelligent agents capable of performing tasks with human-like proficiency.

Reinforcement learning is another subfield of machine learning focused on training agents to make sequential decisions in dynamic environments by maximizing cumulative rewards. Interactions between the agent and the environment can be modelled as a Markov Decision Process (MDP)~\citep{puterman2014markov}. There are various MDP models with different assumptions that have been widely studied~\citep{wei2021learning,auer2008near,saldi2017asymptotic,neu2021online,woerner2015approximate,kara2023q,gao2021provably,altman2021constrained,guestrin2003efficient}. A closely related framework to reinforcement learning is inverse reinforcement learning. While an agent focuses on learning optimal policies from explicit rewards in reinforcement learning, an agent in inverse reinforcement learning seeks to infer the underlying reward function from observed demonstrations or expert trajectories. By leveraging on inverse reinforcement learning, agents can learn to imitate human-like decision-making processes and preferences, even in complex and uncertain environments. This approach is particularly useful in scenarios where the reward function is not explicitly provided, allowing for more flexible and adaptive learning strategies~\citep{fu2017learning,ziebart2008maximum,hadfield2016cooperative,ramachandran2007bayesian,boularias2011relative,levine2011nonlinear,choi2011inverse,ng2000algorithms}. 

Quantum computing has been a focal point of intense research and development efforts driven by its potential to revolutionize various fields from business, drug discovery and materials science to optimization and machine learning. From Shor's algorithm~\citep{shor1994algorithms} for integer factorization to Grover's algorithm~\citep{grover1996fast} for unstructured database search, quantum computing has showcased its prowess in solving computationally demanding tasks with unprecedented efficiency~\citep{rebentrost2014quantum,kerenidis2016quantum,lloyd2014quantum,kerenidis2019q,biamonte2017quantum,zhang2020recent,ambainis2007quantum,brassard2002quantum}. Seeing the power of quantum computing, a natural question is ``how can apprenticeship learning take advantage of quantum algorithms to enhance the capabilities of intelligent agents, enabling them to acquire expertise through observation and imitation in quantum-enhanced environments?". In this work, we study the problem of apprenticeship learning in the quantum setting by exploring how quantum algorithms can facilitate the acquisition of expert-level knowledge, paving the way for more capable and versatile quantum agents.

\subsection{Markov decision processes and apprenticeship learning}
A Markov Decision Process (MDP) can be represented as a 5-tuple $(\mathcal S, \mathcal A, R, P, \gamma)$, where $\mathcal S$ is a finite set of states with $|\mathcal S| = S$; $\mathcal A$ is a finite set of actions with $|\mathcal A| = A$; $R:\mathcal S\times \mathcal A\rightarrow \mathbb R$ is a reward function; $P = \{p(s'\vert s, a)\}_{s, a}$ is a set of transition probabilities, i.e. $p(s'\vert s, a)$ denotes the probability of transiting to state $s'$ after choosing action $a$ at state $s$; $\gamma\in [0, 1)$ is a discount factor. We let MDP$\backslash$R denote an MDP without a reward function, i.e. $(\mathcal S, \mathcal A, P, \gamma)$. 

A \emph{policy} $\pi$ is a mapping from states to a probability distribution over actions. The value of a policy on an MDP is called the \emph{value function}. For a given policy $\pi$, the value function $V^\pi:\mathcal S\rightarrow\left[0, \frac{1}{1 - \gamma}\right]$ is defined by $V^\pi(s) = \mathbb E\left[\displaystyle\sum_{t=0}^\infty \gamma^t R\left(s^{(t)}, a^{(t)}\right)\Big\vert \pi, s^{(0)} = s\right]$, where the expectation is taken over the random sequence of states drawn. It is known that any discounted MDP admits an optimal policy $\pi^*$ such that $V^{\pi^*}(s)\geq V^{\pi}(s)$ for all $\pi\in\Pi, s\in\mathcal S$, where $\Pi$ is the set of all policies~\citep{wang2021quantum}. On the other hand, for some small $\epsilon\in (0, 1)$, a policy $\tilde\pi$ is said to be \emph{$\epsilon$-optimal} if $V^{\pi^*}(s)\geq V^{\tilde\pi}(s) - \epsilon$ for all $s\in\mathcal S$. 

Often, when the state-action space is too large, the value function of an MDP can be approximated. A common class of approximators for value function approximation is the class of linear architectures~\citep{bradtke1996linear,melo2007q,osband2016generalization,azizzadenesheli2018efficient,yang2019sample}. Such an approximation corresponds to the use of linear parametric combination of basis functions, also known as \emph{feature vectors}~\citep{wei2021learning,wang2020reward,he2021logarithmic,jin2023provably}. 

In this paper, we are given feature vectors $\phi:\mathcal S\times \mathcal A\rightarrow [0, 1]^k$ over the state-action pairs. The collection $\left\{\phi(s, a)\right\}_{s, a}$ of feature vectors are stored in a feature matrix $\Phi\in\mathbb R^{SA\times k}$, where we assume query access to its entries. Furthermore, it is assumed that $\sup_{s\in\mathcal S, a\in\mathcal A}\left\Vert\phi (s, a)\right\Vert_2\leq 1$. In this work, we make the assumption that the reward function is linear, i.e. the ``true" reward function is given by $R^*(s, a) = w^*\cdot \phi(s, a)$, where $w^*\in\mathbb R^k$. It is also assumed that for all $s\in\mathcal S, a\in\mathcal A$, $\vert R(s, a)\vert\leq 1$\footnote{The same assumption was made in Ref.~\cite{abbeel2004apprenticeship}}. In order for this assumption to hold, we assume that $\Vert w^*\Vert_1\leq 1$. By linearity of the reward function, we can express the value function as 
\begin{align*}
    V^\pi(s)
    & = \mathbb E\left[\displaystyle\sum_{t=0}^\infty \gamma^t R\left(s^{(t)}, a^{(t)}\right)\Big\vert s^{(0)} = s, \pi\right]\\
    & = \mathbb  E\left[\displaystyle\sum_{t=0}^\infty \gamma^t w\cdot \phi\left(s^{}(t)), a^{(t)}\right)\Big\vert s^{(0)} = s, \pi\right]\\
    & = w\cdot \mathbb  E\left[\displaystyle\sum_{t=0}^\infty \gamma^t \phi\left(s^{(t)}), a^{(t)}\right)\Big\vert s^{(0)} = s, \pi\right],
\end{align*}
where the expectation is taken over the random sequence of states drawn by first starting from some initial state distribution $\mathcal D$ and choosing actions according to $\pi$. A \emph{feature expectation} is defined as the expected discounted accumulated feature value vector, i.e. 
\begin{align*}
    \mu(\pi\vert s) = \mathbb E\left[\displaystyle\sum_{t=0}^\infty \gamma^t \phi\left(s^{(t)}, a^{(t)}\right)\Big\vert s^{(0)} = s, \pi\right]\in\mathbb R^k. 
\end{align*}
Using this definition, one can rewrite 
\begin{align}\label{eqn:rewrite}
    V^{\pi}(s) = w\cdot \mu(\pi\vert s).
\end{align}
From now on, we shall drop the initial state $s$ in $\mu(\pi\vert s)$ and use the notation $\mu(\pi)$ instead when the initial state is irrelevant in the context. 

In the apprenticeship learning setting, the algorithm (apprentice) is required to ``learn" in an MDP where the reward function is not explicitly given. The algorithm is allowed to ``observe" demonstrations by an \emph{expert} and tries to maximize a reward function that is expressible as a linear combination of feature vectors. In order to do so, we assume access to an expert's policy $\pi_E$. In particular, we assume the ability to observe trajectories generated by an expert starting from $s^{(0)}\sim \mathcal D$ and choosing action according to $\pi_E$. 
Given an MDP$\backslash$R, feature matrix $\Phi$ and expert's feature expectations $\mu_E = \mu(\pi_E)$, apprenticeship learning aims to find a policy whose performance is close to that of the expert’s, on the unknown reward function $R^* = \Phi  w^*$. To achieve this, the algorithm finds a new policy $\pi^{(i)}$ at every iteration $i\in\{0, \cdots , n\}$, evaluates $\mu\left(\pi^{(i)}\right)$ and stores $\mu_E - \mu\left(\pi^{(i)}\right)$ in a matrix $\Phi'\in\mathbb R^{(n+2)\times k}$, whose first row corresponds to $\mu_E$ and row $i\in\{2, \cdots, n+2\}$ corresponds to $\mu_E - \mu\left(\pi^{(i-2)}\right)$. This process repeats until a policy $\bar \pi$ such that $\left\Vert \mu_E - \mu(\bar \pi)\right\Vert_2\leq\epsilon$ for some small $\epsilon\in (0, 1)$ is found. Hence, for any $w\in\mathbb R^k$ with $\Vert w\Vert_1\leq 1$, we have 
\begin{align*}
    \left\vert \mathbb E\left[\displaystyle\sum_{t=0}^\infty \gamma^t\ R\left(s^{(t)}, a^{(t)}\right)\Big\vert \pi_E\right] - \mathbb E\left[\displaystyle\sum_{t=0}^\infty \gamma^t R\left(s^{(t)}, a^{(t)}\right)\Big\vert\tilde\pi\right]\right\vert
    & = \left\vert w^T \mu_E - w^T\mu(\tilde\pi)\right\vert \\
    & \leq \left\Vert w\right\Vert_2\left\Vert \mu_E - \mu(\tilde\pi)\right\Vert_2 \\
    & \leq \epsilon. 
\end{align*}
The approach described above is known as inverse reinforcement learning~\citep{abbeel2004apprenticeship, ng2000algorithms}. 

\subsection{Our work}
Our main contribution is a quantum algorithm for apprenticeship learning via inverse reinforcement learning. As an intermediate step, we give a classical algorithm for approximate apprenticeship learning.  Our algorithms are based on the framework of~\cite{abbeel2004apprenticeship}. In this framework, the algorithms find a policy that performs as well as, or better than the expert's, on the expert’s unknown reward function. Our observation is that the robust analogue of this framework can be implemented using existing subroutines, both in the classical and quantum settings. This motivates us to study classical approximate and quantum apprenticeship learning algorithms, their convergence guarantees and to what extent the latter algorithm outperforms the former in terms of the time complexity. 

We show that using subroutines such as multivariate Monte Carlo, a sampling-based algorithm for linear classification of vectors and a reinforcement learning algorithm that returns a nearly optimal policy, our algorithms find a policy whose feature expectation is close to the expert's feature expectation at up to some error in the $\ell_2$-norm. Both our algorithms converge after $O\left(\frac{k}{(1 - \gamma)^2 (\epsilon^2 - \epsilon_{\operatorname{RL}})}\log \frac{k}{(1 - \gamma)^2\epsilon^2}\right)$ iterations, where $k$ is the dimension of the feature vectors, $(1 - \gamma)$ is the effective time horizon, $\epsilon_{\operatorname{RL}}$ is the error from using, as a subroutine, an approximate reinforcement learning algorithm that outputs an $\epsilon_{\operatorname{RL}}$-optimal policy, and $\epsilon$ is the error of the algorithm. Furthermore, our quantum algorithm is quadratically faster than its classical counterpart in $k$ and the size of the action space $A$. However, its time complexity scales worse in terms of the error dependence $\epsilon$ and the effective time horizon $(1 - \gamma)$. This is due to the tuning of error in certain subroutines in order to achieve convergence. Our quantum speedup comes from techniques that rely on amplitude estimation~\citep{brassard2002quantum}. Moreover, we leverage a recent computational model~\citep{allcock2023constant} comprising of a quantum processing unit and a quantum memory device to allow efficient moving and addressing of qubits.

We summarize our results in Table~\ref{table2}. 

\begin{table}[h]
\caption{Summary of results. In this work, $k$ is the dimension of the feature vectors; $\epsilon, \epsilon_{\operatorname{RL}}\in (0, 1)$ are errors of the algorithm and of the reinforcement learning subroutine respectively; $S$ and $A$ are sizes of the state and action spaces respectively; $\gamma\in[0, 1)$ is the discount factor.}\label{table2}%
\begin{tabular}{@{}llll@{}}
\toprule
Algorithms & Per-iteration time complexity  \\
\midrule
Classical approximate apprenticeship learning (Algorithm~\ref{algo}) & $\tilde O\left(\frac{k + SA}{(1 - \gamma)^7\epsilon^6(\epsilon^2 - \epsilon_{\operatorname{RL}})}\right)$ \\ 
Quantum apprenticeship learning (Algorithm~\ref{QAL}) & $\tilde O\left(\frac{\sqrt k + S\sqrt A}{(1 - \gamma)^{16}\epsilon^{24}(\epsilon^2 - \epsilon_{\operatorname{RL}})^{0.5}}\right). $\\
\botrule
\end{tabular}
\end{table}

The rest of the paper is organized as follows: In Section~\ref{sec:preliminaries}, we introduce useful notations, the computational model and related work. Section~\ref{sec:AL} briefly describes a slightly modified version of the algorithm by~\citep{abbeel2004apprenticeship} (Algorithm~\ref{AL}) and its convergence analysis. Next, we introduce the necessary classical subroutines and our classical approximate algorithm for apprenticeship learning (Algorithm~\ref{algo}), together with its time complexity analysis in Section~\ref{sec:approximate_AL}. In Section~\ref{sec:QAL}, we introduce the necessary quantum subroutines, our quantum apprenticeship learning algorithm (Algorithm~\ref{QAL}) and its running time analysis. Both our classical approximate and quantum algorithms have the same convergence guarantees, which follows from Algorithm~\ref{AL}. Lastly, we conclude our work in Section~\ref{sec:conclusion}. 

\section{Preliminaries}\label{sec:preliminaries}
\subsection{Notations}
For a positive integer $k\in\mathbb Z_+$, let $[k]$ denote the set $\{1, \cdots, k\}$. For a vector $v\in\mathbb R^k$, we use $v(i)$ to denote the $i$-th entry of $v$. The $\ell_2$- and $\ell_1$-norms of a vector $v\in\mathbb R^k$ are defined as $\left\Vert v\right\Vert_2\coloneq \sqrt{\displaystyle\sum_{i=1}^k \left(v(i)\right)^2}$ and $\Vert v\Vert_1\coloneqq \displaystyle\sum_{i=1}^k\left\vert v(i)\right\vert$ respectively.  For a matrix $M\in\mathbb R^{n\times k}$, we use $M(i)$ to denote the $i$-th row of $M$ and use $M(i, j)$ to denote the $(i, j)$-th entry of $M$. We use $\bar 0$ to denote the all zeroes vector and use $\ket{\bar 0}$ to denote $\ket{0}\otimes\cdots\otimes\ket{0}$ where the number of qubits is clear from the context. We use $\mathbf e$ to denote the all ones vector, where the dimension is clear from the context. We use $\tilde O(\cdot)$ to hide  the polylog factor, i.e. $\tilde O(f(n)) = O(f(n)\operatorname{polylog}f(n))$. 

\subsection{Computational model}
Our classical computational model is a classical random access machine. The input to the apprenticeship learning problem is a feature matrix $\Phi\in [0, 1]^{SA\times k}$ which is stored in a classical-readable read-only memory (ROM). Reading any entry of $\Phi$ takes constant time. The classical computer can write bits to a classical writable memory that stores the matrix $\Phi'\in\mathbb R^{(n+2)\times k}$, that is initialized to the all zeroes matrix at the beginning of the algorithm. As the algorithm proceeds, the first row of the matrix stores the vector $\hat\mu_E$ that estimates $\mu_E$. The subsequent rows of the matrix are updated with $\Phi'(i+1) = \hat\mu_E - \mu^{(i)}$ in every iteration $i\in\{0\}\cup[n]$. 

Our quantum computational model is a Quantum Processing Unit (QPU) and Quantum Memory Device (QMD)~\citep{allcock2023constant}. This computational model generalizes the Quantum Random Access Memory\footnote{A QRAM is a quantum analogue of a classical Random Access Memory that stores classical or quantum data, which allows queries to be performed in superposition.} (QRAM)~\citep{giovannetti2008architectures,Giovannetti2008} and the Quantum Random Access Gate (QRAG)\footnote{While a QRAM can be thought of as a “read-only” memory, a QRAG can be seen as a “read-write” memory since qubits are swapped from memory register into the target register, acted on, and then swapped back}.~\citep{ambainis2007quantum,aaronson2019quantum,buhrman2021limits,guo2022implementing}. 
In particular, it allows the following operations
\begin{align*}
    & \ket{i}\ket{B}\ket{x}\ket{\bar 0}\rightarrow \ket{i}\ket{B}\ket{x\oplus B_i}\ket{\bar 0}, \quad \forall i\in[n], B\in\{0, 1\}^{n\times k}, x\in\{0, 1\}^k\\
    & \ket{i}\ket{B}\ket{x}\ket{\bar 0}\rightarrow \ket{i}\otimes_{j=1}^{i-1}\ket{B_j}\ket{x}\otimes_{j=i+1}^{n}\ket{B_j}
    \ket{B_i}\ket{\bar 0}, \quad \forall i\in[n], B\in\{0, 1\}^{n\times k}, x\in\{0, 1\}^k.
\end{align*}
Here, $B$ is a $n\times k$ matrix (with $B_1, \ldots, B_n$ denoting its rows) stored in a quantum memory. The first operation reads a row of $B$, by XORing it with the current values in the register to which it is read. The second operation writes a row by swapping in values from an extra register.

We assume that the feature matrix $\Phi$ is stored in a QMD of $SA$ memory registers of size $k$. Specifically, the quantum computer has access to a feature matrix oracle $\mathcal O_\Phi$ which performs, for all $s\in\mathcal S, a\in\mathcal A$, the following mapping:
\begin{align*}
    \mathcal O_\Phi: \ket{s}\ket{a}\ket{\bar 0}\rightarrow \ket{s}\ket{a}\ket{\phi(s, a)}, 
\end{align*}
where $\phi(s, a)\in [0, 1]^k$ denotes the $(s, a)$-th row of $\Phi$. The second register is assumed to contain sufficient qubits to ensure the accuracy of subsequent computations, in analogy to the sufficient number of bits that a classical algorithm needs to run correctly. 

Our quantum algorithm shall commonly build $\mathsf{KP}$-trees of vectors $\hat\mu_E$\footnote{$\hat \mu_E$ is an approximate of $\mu_E$, more details in Section~\ref{sec:AL}.} and $\hat\mu_E - \mu^{(i)}$ for all $i\in \{0\}\cup [n]$. These $\mathsf{KP}$-trees are collectively called $\mathsf{KP}_{\Phi'}$. A $\mathsf{KP}$-tree is a classical data structure introduced by~\cite{kerenidis2016quantum,  prakash2014quantum} to store vectors or matrices and facilitates efficient quantum state preparation.  
\begin{fact}[$\mathsf{KP}$-tree~\citep{kerenidis2016quantum, prakash2014quantum}]\label{fact:KP-tree}
    Let $M\in\mathbb R^{m\times n}$ be a matrix with $w \in \mathbb N$ non-zero entries. There is a data structure of size $O(w\operatorname{poly}\log{mn})$ that stores each input $(i,j, M(i, j))$ in time $O(\operatorname{poly}\log{mn})$. Furthermore, finding $\Vert M\Vert_F^2$ and $\Vert M(i)\Vert_2$ takes $O(1)$ and $O(\log n)$ time respectively. 
\end{fact}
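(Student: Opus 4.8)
The statement to prove is Fact~\ref{fact:KP-tree}, the $\mathsf{KP}$-tree data structure guarantees. Let me sketch a proof plan.

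The plan is to construct, for each row $M(i)$ of the matrix, a binary tree of depth $\lceil\log n\rceil$ whose leaves correspond to the $n$ coordinates of that row. In the leaf for coordinate $j$ I would store the value $(M(i,j))^2$ together with the sign of $M(i,j)$; in each internal node I would store the sum of the values of its two children, so that the value at an internal node equals the sum of squares of the entries of $M(i)$ lying in the subtree beneath it. The root of the $i$-th tree then holds $\|M(i)\|_2^2$. On top of these per-row trees I would add one more tree over the row indices $i\in[m]$, whose leaf $i$ stores $\|M(i)\|_2^2$ (i.e.\ the root value of the $i$-th row tree) and whose internal nodes again store subtree sums; the root of this outer tree holds $\sum_i \|M(i)\|_2^2 = \|M\|_F^2$.

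Next I would argue the size and update-time bounds. Only nodes on a root-to-leaf path are affected by inserting a single entry $(i,j,M(i,j))$: there are $O(\log n)$ such nodes in the row tree and $O(\log m)$ in the outer tree, and at each we perform an $O(1)$-time arithmetic update (subtracting the old squared value and adding the new one, propagating the difference upward), so an insertion costs $O(\operatorname{poly}\log mn)$ time. Since each of the $w$ nonzero entries touches only $O(\log mn)$ nodes, and empty subtrees need not be explicitly materialized (one keeps only nodes that are ancestors of some stored entry), the total number of stored nodes is $O(w\operatorname{poly}\log mn)$, giving the claimed space bound. Reading $\|M\|_F^2$ is just reading the outer root in $O(1)$ time; reading $\|M(i)\|_2^2$ (hence $\|M(i)\|_2$) requires locating the root of the $i$-th row tree, which costs $O(\log m)$ via the outer tree index or $O(\log n)$ if the row trees are directly addressable — I would phrase it as $O(\log n)$ consistent with the statement, noting the row-tree root can be cached at the outer leaf.

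I do not expect a genuine obstacle here, since this is a standard segment-tree / Fenwick-style construction; the only mild subtlety is bookkeeping the sign information separately from the squared magnitudes (needed later for coherent state preparation with signed amplitudes) and being careful that the quoted $O(\log n)$ and $O(1)$ times refer to accessing already-constructed nodes rather than to insertion. If one wants exact rather than amortized bounds, I would fix the tree depth to $\lceil\log n\rceil$ and $\lceil\log m\rceil$ in advance so that all path lengths are deterministic. The remaining claims ($O(\operatorname{poly}\log mn)$ insertion, $O(w\operatorname{poly}\log mn)$ space) then follow immediately from the path-length counting above, and I would simply cite~\cite{kerenidis2016quantum, prakash2014quantum} for the full details since the construction is theirs.
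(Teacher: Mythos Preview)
Your sketch is correct and matches the standard $\mathsf{KP}$-tree construction from~\cite{kerenidis2016quantum, prakash2014quantum}: per-row binary trees storing squared entries (with signs) at the leaves and partial sums at internal nodes, topped by an outer tree over row norms. Note, however, that the paper does not actually prove Fact~\ref{fact:KP-tree} at all --- it is stated purely as a cited fact from the literature, with no accompanying argument. Your proposal thus goes beyond what the paper does by outlining the construction explicitly; since you already plan to defer the full details to the original references, your write-up is entirely appropriate and there is nothing to correct.
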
 

The quantum computer can update the $\mathsf{KP}$-trees via the QMD. Namely, at iteration $i$, it writes a vector $\Phi'(i+1) = \hat\mu_E - \mu\left(\pi^{(i)}\right)\in\mathbb R^k$ into the memory, which allows the quantum computer to invoke the oracle $\mathcal O_{\Phi'}$  that performs the mapping
\begin{align*}
        \mathcal O_{\Phi'}\ket{j}\ket{\bar 0}\rightarrow\ket{j}\ket{\hat\mu_E - \mu\left(\pi^{(j)}\right)}, \quad j\in \{0\}\cup [n]. 
\end{align*}
We refer to the ability to invoke oracles $\mathcal O_{\Phi}$ and $\mathcal O_{\Phi'}$ as having quantum access to $\Phi $ and $\Phi'$.

For the MDP, we consider the classical generative model studied in~\cite{wang2021quantum, kearns1998finite, kearns2002sparse, kakade2003sample}, which 
allows us to collect $N$ i.i.d. samples 
\begin{align*}
    s^i_{s, a}\sim p(\cdot\vert s, a), \quad i\in[N]
\end{align*}
given a state-action pair $(s, a)$. This enables the construction of an empirical transition kernel $\hat P = \{\hat p(s'\vert s, a)\}_{s, a}$ satisfying 
\begin{align*}
    \hat p(s'\vert s, a) = \frac{1}{N} \sum_{i=1}^N\mathbbm 1_{\{s^i_{s, a} = s'\}}, \quad \forall s'\in\mathcal S,
\end{align*}
where $\mathbbm 1_{\{\cdot\}}$ is the indicator function. This results in an empirical MDP with parameters $(\mathcal S, \mathcal A, R, \hat P, \gamma)$. We assume that a single call to the generative model takes constant time. 

While trajectories can be observed classically, the same cannot be done in the quantum setting. Following the quantum-accessible environments studied by~\cite{wang2021quantum, wiedemann2022quantum, jerbi2022quantum, zhong2023provably}, we require quantum oracles to access the state-action trajectories and parameters of an MDP at each time step $t\in\{0, \cdots, H\}$, for some $H\in\mathbb Z_+$. 
\begin{enumerate}
    \item A transition probability oracle $\mathcal O_{P}$ that returns a superposition over the next states according to the transition probability
    \begin{align*}
        \mathcal O_{P}:\ket{s^{(t)}}\ket{a^{(t)}}\ket{0}\rightarrow \ket{s^{(t)}}\ket{a^{(t)}}\otimes \displaystyle\sum_{s^{(t+1)}\in\mathcal S}\sqrt{p\left(s^{(t+1)}\vert s^{(t)}, a^{(t)}\right)}\ket{s^{(t+1)}}
\end{align*}
    Note that if one performs a measurement on all three registers after querying the oracle, the result will be equivalent to drawing one sample in the classical generative model.
    \item A reward oracle $\mathcal O_R$ such that given access to the feature matrix $\Phi\in[0, 1]^{SA\times k}$ and a vector $w\in\mathbb R^k$, performs the following mapping:
    \begin{align*}
        \mathcal O_R:\ket{s^{(t)}}\ket{a^{(t)}}\ket{\bar 0}\rightarrow \ket{s^{(t)}}\ket{a^{(t)}}\ket{w^T\Phi\left(s^{(t)}, a^{(t)}\right)}. 
    \end{align*}
\end{enumerate}
We assume that a single query to the transition probability and reward oracles takes $O(1)$ time.  

We refer to the running time of a classical/quantum computation as the number of elementary gates performed, plus the number of calls to the classical/quantum memory device, excluding the gates that are used inside the oracles for the quantum accessible environments. We assume a classical arithmetic model, which allows issues arising from the fixed-point representation of real numbers to be ignored. In this model, elementary arithmetic operations take constant time.  In the quantum setting, we assume a quantum circuit model. Every quantum gate in the circuit represents an elementary operation, and the application of every quantum gate takes constant time. The time complexity of a given unitary operator $U$ is the minimum number of elementary quantum gates needed to prepare $U$. In addition, we assume a quantum arithmetic model, which is equivalent to the classical model, i.e. arithmetic operations take constant time.

\subsection{Related work}
In the apprenticeship learning setting,~\cite{abbeel2004apprenticeship} studied the framework of learning in an MDP where the reward function is not explicitly given. Given access to demonstrations by an expert that tries to maximize a linear reward function, they gave an algorithm using an inverse reinforcement learning approach, that outputs an approximate of the true reward function and finds a policy that performs at least as well as, or better than that of the expert on the unknown reward function. The same authors also considered reinforcement learning in systems with unknown dynamics~\citep{abbeel2005exploration}. They showed that given initial demonstrations by an expert, no explicit exploration is necessary, and a near-optimal performance can be obtained, simply by repeatedly executing exploitation policies that try to maximize rewards. In the semi-unsupervised apprenticeship learning setting, many sample trajectories are observed but only a few of them are labeled as the experts’ trajectories. \cite{valko2013semi} defined an extension of the max-margin inverse reinforcement learning by~\cite{abbeel2004apprenticeship}, and studied the conditions under which the unlabeled trajectories can be helpful in learning good performing policies. They showed empirically that their algorithm outputs a better policy in fewer iterations than the algorithm by~\cite{abbeel2004apprenticeship} that does not take the unlabeled trajectories into account.

In the field of inverse reinforcement learning,~\cite{ng2000algorithms} first characterized the set of all reward functions for which a given policy is optimal. They then derived three algorithms: two of which are in the setting where the complete policy is known, and the third algorithm is in the setting where the policy is known through a finite set of observed trajectories. As a result of removing degeneracy\footnote{The existence of a large set of reward functions for which the observed policy is optimal}, inverse reinforcement learning is formulated into an efficiently solvable linear program.~\cite{fu2017learning} proposed a practical and scalable inverse reinforcement learning algorithm based on an adversarial reward learning formulation. In particular, they showed that their algorithm is capable of recovering reward functions that are robust to changes in dynamics, enabling policies to be learned  despite significant variation in the environment seen during training. By posing inverse reinforcement learning as a Bayesian learning task,~\cite{ramachandran2007bayesian}  showed that improved solutions can be obtained. In particular, they provided a theoretical framework and tractable algorithms for Bayesian inverse reinforcement learning. Their numerical results show that the solutions output by their algorithm are more informative in terms of the reward structure as compared to that of~\cite{ng2000algorithms}. More related work on inverse reinforcement learning can be found in survey papers by~\cite{arora2021survey, zhifei2012survey, adams2022survey, zhifei2012review}. 

Seeing the wide application of reinforcement learning in diverse disciplines from healthcare, robotics and autonomous, communication and networking, natural language processing and computer vision~\citep{yu2021reinforcement,kormushev2013reinforcement,kober2013reinforcement,luong2019applications,le2022deep}, theoretical research on reinforcement learning algorithms seek to design faster and more resource efficient learning methods that can generalize across different domains. These efforts aim to bridge the gap between theoretical insights and practical applications. Policy iteration and value iteration are among the fundamental techniques in reinforcement learning, facilitating the convergence towards an optimal policy~\citep{lagoudakis2003least,bertsekas2011approximate,puterman1978modified, pineau2003point,li2020breaking,silver2014deterministic}. 

In the quantum setting, reinforcement learning algorithms gain a speedup over their classical counterparts, thanks to tools such as amplitude amplification and estimation~\citep{brassard2002quantum}, quantum mean estimation~\citep{montanaro2015quantum} and block-encoding techniques~\citep{low2019hamiltonian, chakraborty2018power}. Such algorithms include~\citep{wiedemann2022quantum,wang2021quantum,jerbi2022quantum,cherrat2023quantum}. Further related work on quantum reinforcement learning can be found in survey papers by~\cite{dong2008quantum,dunjko2017advances,meyer2022survey}. For the task of linear classification,~\cite{rebentrost2014quantum} gave a quantum algorithm based on the Harrow-Hassidim- Lloyd (HHL) algorithm~\citep{harrow2009quantum} and quantum linear algebra techniques. In the case of general data sets,~\cite{li2019sublinear} gave a quantum algorithm that runs in time $\tilde O\left(\frac{\sqrt n}{\epsilon^4} + \frac{\sqrt d}{\epsilon^8}\right)$, improving over the classical running time of $O\left(\frac{n+d}{\epsilon^2}\right)$ by~\cite{clarkson2012sublinear}. The complexity of the quantum Support Vector Machine (SVM) has been studied by~\cite{gentinetta2024complexity}.

\section{Apprenticeship learning via inverse reinforcement learning}\label{sec:AL}
\cite{abbeel2004apprenticeship} gave an algorithm for (apprenticeship) learning an unknown reward function using inverse reinforcement learning. First, an estimate of the expert’s feature expectations $\mu_E \coloneqq \mu(\pi_E)$ is obtained and stored in the first row of $\Phi'$. More specifically, given a set of $m$ trajectories $\{s^{(0)}_i, a^{(0)}_i, s^{(1)}_i, a^{(1)}_i, \cdots \}_{i=1}^m$ 
generated by the expert, denote the empirical estimate for $\mu_E$ as
\begin{align}\label{eqn:expert_feature_expectation}
    \hat\mu_E \coloneqq \frac{1}{m}\displaystyle\sum_{i=1}^m \displaystyle\sum_{t=0}^\infty \gamma^t \phi\left(s^{(t)}_i, a^{(t)}_i\right)
\end{align}
and $\Phi'(1) = \hat\mu_E$. The algorithm then proceeds as follows: the idea is to first randomly pick a policy $\pi^{(0)}$ and compute (or approximate) its feature expectation $\mu^{((0)} \coloneqq \mu\left(\pi^{(0)}\right)$. At any iteration $j\in\mathbb Z_+$, $w^{(j)}$ is the vector of $\ell_2$-norm at most 1, that maximizes the minimum of inner product with  $\mu_E - \mu\left(\pi^{(i)}\right)$ for $i \in \{0, \cdots, (j-1)\}$. The (presumably optimal) policy for the next iteration is then generated using any reinforcement learning algorithm augmented with $R = \Phi\cdot w^{(j)}$ as the reward function. This process is repeated until a $w^{(j')}$ satisfying $\left\Vert w^{(j')^T}\left(\displaystyle\min_{i = \{0, \cdots, (j'-1)\}}\mu_E - \mu\left(\pi^{(i)}\right)\right)\right\Vert_2\leq \epsilon$ for some $\epsilon\in (0, 1)$ is obtained. 

We slightly modify the algorithm of~\cite{abbeel2004apprenticeship} in a way such that the policies generated are $\epsilon_{\operatorname{RL}}$-optimal for some $\epsilon_{\operatorname{RL}}\in (0, 1)$. The modified algorithm is summarized in Algorithm~\ref{AL}. 

\begin{breakablealgorithm}
    \caption{Apprenticeship learning algorithm}
    \label{AL}
    \begin{algorithmic}[1]
    \Require Initialize policy $\tilde \pi^{(0)}$, errors $\epsilon, \epsilon_{\operatorname{RL}}\in (0, 1)$ such that $\epsilon\geq\sqrt{\epsilon_{\operatorname{RL}}}$. 
    \Ensure $\left\{\tilde \pi^{(i)}: i = 0, \cdots, n\right\}$. 
    \State Compute (or approximate via Mote Carlo) $\mu^{(0)} \coloneqq \mu\left(\tilde \pi^{(0)}\right)$. 
    \State Set $i = 1$.
    \State \label{line:0SVM} Compute $t^{(i)} = \displaystyle\max_{w:\Vert w\Vert_2\leq 1}\min_{j\in \{0, \cdots, (i-1)\}}w^T\left(\mu_E - \mu^{(j)}\right)$ and let $w^{(i)}$ be the corresponding argument maximum. 
    \State If $t^{(i)}\leq \epsilon$, then terminate and set $n = i$. 
    \State Compute an $\epsilon_{\operatorname{RL}}$-optimal policy $\tilde \pi^{(i)}$ for the MDP using rewards $R = \Phi w^{(i)}$ 
    \State Compute (or estimate)  $\mu^{(i)} \coloneqq \mu\left(\tilde  \pi^{(i)}\right)$. 
    \State Set $i = i+1$ and go to Step 3. 
\end{algorithmic}
\end{breakablealgorithm}

In Figure~\ref{fig:1}, we give a workflow diagram to illustrate the main steps of the apprenticeship learning algorithm.  

\begin{figure}[H]\label{fig:1}
\begin{center}
\begin{tikzpicture}[node distance=2cm]
\node (input) {Input: Arbitrarily chosen random policy};
\node (ME) [process, below of=input] {\textbf{Mean estimation}: \\ Estimate the mean of feature vector};
\node (SVM) [process, below of=ME] {\textbf{Support vector machine solver}:\\ Find the weight vector};
\node (stop) [right of=SVM, xshift=5cm] {Terminate};
\node (RL) [process, below of=SVM] {\textbf{Reinforcement learning algorithm}: \\ Obtain an $\epsilon_{\operatorname{RL}}$-optimal policy};
\draw [->] (input) to node[midway, right] {}(ME);
\draw  [->] (ME) to node[midway, right] {} (SVM);
\draw [->] (SVM) to node[midway, right] {if $t^{(i)} > \epsilon$} (RL);
\draw [->] (SVM) to node[midway, above] {if $t^{(i)} \leq \epsilon$} (stop);
\coordinate (leftRL) at ([xshift=-1.5cm] RL.west);
\coordinate (leftME) at ([xshift=-1.5cm] ME.west);
\draw[->] (RL.west) -- (leftRL) -- (leftME) -- (ME.west);
\end{tikzpicture}
\caption{An illustration of the apprenticeship learning algorithm.}
\end{center}
\end{figure}

Before we delve into the convergence analysis, we define some notations that we will be using in the remaining part of this subsection. Given a set of policies $\tilde \Pi$, we define $\tilde M = \operatorname{Co}\{\mu(\tilde\pi): \tilde \pi\in\tilde \Pi\}$ to be the convex hull of the set of feature expectations attained by policies $\tilde \pi\in\tilde \Pi$. Given any vector of feature expectations $\tilde\mu\in \tilde M$, there exists a set of policies $\tilde \pi_1, \cdots, \tilde \pi_n\in\tilde \Pi$ and weights $\{\lambda_i\}_{i=1}^n$ such that $\lambda_i \geq 0$ for all $i\in [n]$, $\sum_{i=1}^n \lambda_i = 1$ and $\tilde\mu = \sum_{i=1}^n \lambda_i \mu(\tilde \pi_i)$. Therefore, given any point $\tilde\mu\in \tilde  M$, we can obtain a new policy whose feature expectation is exactly $\tilde\mu$ by taking a convex combination of policies in $\tilde \Pi$. We also define $\tilde M^{(i)} = \operatorname{Co}\{\mu(\tilde \pi(j)): j = 0,\cdots,i\}$ to be the convex hull of the set of feature expectations of policies found after iterations $0,\cdots, i$ of the algorithm. Furthermore, define $\bar\mu^{(i)}\coloneqq \argmin_{\mu\in \operatorname{Co}\{\bar\mu^{(i)}, \mu^{(i+1)}\}} \left\Vert \hat\mu_E - \mu\right\Vert_2$  and hence $\bar\mu^{(i)} \in \tilde M$. 

The following lemma establishes improvement in a single iteration of Algorithm~\ref{AL}.
\begin{lemma}[Per-iteration improvement of Algorithm~\ref{AL}]\label{lem:lemma_3}
    Let there be given an $MDP\backslash R$, feature vectors $\phi:\mathcal S \rightarrow [0, 1]^k$ and a set of policies $\tilde\Pi$, $\bar\mu^{(i)}\in \tilde M$. Let $\epsilon
    _{\operatorname{RL}} \in (0, 1)$ be such that $\left\Vert \hat\mu_E\right\Vert_2^2\geq 2\epsilon_{\operatorname{RL}}$ and let $\tilde\pi^{(i)}$ be the $\epsilon_{\operatorname{RL}}$-optimal policy for the $MDP\backslash R$ augmented with reward $R(s) = \left(\hat\mu_E - \bar\mu^{(i)}\right)\cdot \phi(s)$, i.e. $(\mu_E - \bar\mu^{(i)})\cdot \mu\left(\tilde pi^{i+1}\right) \geq  \argmax_\pi (\mu_E - \bar\mu^{(i)})\cdot \mu(\pi) - \epsilon_{\operatorname{RL}}$. Finally, let $\tilde \mu^{(i+1)} = \frac{\left(\hat\mu_E - \bar\mu^{(i)}\right)\cdot \left(\mu^{(i+1)} - \bar\mu^{(i)}\right)- \epsilon_{\operatorname{RL}}}{\left\Vert \mu^{(i+1)} - \bar\mu^{(i)}\right\Vert_2^2}\left(\mu^{(i+1)} - \bar\mu^{(i)}\right) + \bar\mu^{(i)}$, i.e. the projection of $\hat\mu_E$ onto the line through $\mu^{(i+1)} = \mu\left(\tilde\pi^{(i+1)}\right)$, $\bar\mu^{(i)}$. Then, 
    \begin{align*}
        \frac{\Vert \hat\mu_E - \tilde\mu^{(i+1)\Vert_2}}{\Vert \hat\mu_E - \bar\mu^{(i)}\Vert_2}\leq  \frac{\sqrt k + (1 - \gamma)\sqrt{\epsilon_{\operatorname{RL}}/2}}{\sqrt{k + (1 - \gamma)^2 \left(\left\Vert \hat\mu_E  - \bar\mu^{(i)}\right\Vert^2_2 - \epsilon_{\operatorname{RL}}\right)}}
    \end{align*}
    and the point $\tilde\mu^{(i+1)}$ is a convex combination of $\bar\mu^{(i)}$ and $\mu^{(i+1)}$.
\end{lemma}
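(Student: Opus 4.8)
The plan is to follow the geometric argument behind the projection method of~\cite{abbeel2004apprenticeship}, but to carry the reinforcement-learning slack $\epsilon_{\operatorname{RL}}$ through the computation and then reduce the claimed bound to an elementary inequality in a few nonnegative scalars. First I would set up coordinates: write $x \coloneqq \hat\mu_E$, $a \coloneqq \bar\mu^{(i)}$, $b \coloneqq \mu^{(i+1)}$, $d \coloneqq b - a$, and put $c \coloneqq (x-a)\cdot d$, $D \coloneqq \left\Vert d\right\Vert_2^2$, $E \coloneqq \left\Vert x - a\right\Vert_2^2$. By the definition of $\tilde\mu^{(i+1)}$ we have $\tilde\mu^{(i+1)} = a + \alpha d$ with $\alpha = (c - \epsilon_{\operatorname{RL}})/D$, so a short expansion gives
\[
  \left\Vert x - \tilde\mu^{(i+1)}\right\Vert_2^2 = E - 2\alpha c + \alpha^2 D = E - \frac{c^2 - \epsilon_{\operatorname{RL}}^2}{D},
\]
so everything reduces to a lower bound on $c$ and an upper bound on $D$.

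For the lower bound on $c$: since $\tilde\pi^{(i+1)}$ is $\epsilon_{\operatorname{RL}}$-optimal for the reward $R(s) = (x - a)\cdot\phi(s)$ and $\hat\mu_E$ lies in the convex hull of the achievable feature expectations, we have $(x-a)\cdot b \geq (x-a)\cdot x - \epsilon_{\operatorname{RL}}$; subtracting $(x-a)\cdot a$ yields $c \geq E - \epsilon_{\operatorname{RL}}$. Since $E \geq 2\epsilon_{\operatorname{RL}}$ under the standing assumptions (the lemma assumes $\left\Vert\hat\mu_E\right\Vert_2^2 \geq 2\epsilon_{\operatorname{RL}}$ and is only invoked while the algorithm has not terminated, so $\left\Vert\hat\mu_E - \bar\mu^{(i)}\right\Vert_2$ is not small), this gives $c \geq \epsilon_{\operatorname{RL}} > 0$, hence $c^2 - \epsilon_{\operatorname{RL}}^2 \geq (E - \epsilon_{\operatorname{RL}})^2 - \epsilon_{\operatorname{RL}}^2 = E(E - 2\epsilon_{\operatorname{RL}})$. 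For the upper bound on $D$: because $\phi(s) \in [0,1]^k$, every feature expectation, and every convex combination of feature expectations (in particular $a = \bar\mu^{(i)} \in \tilde M$), lies in the box $\left[0, \tfrac{1}{1-\gamma}\right]^k$, so $D = \left\Vert b - a\right\Vert_2^2 \leq \tfrac{k}{(1-\gamma)^2}$, and likewise $E \leq \tfrac{k}{(1-\gamma)^2}$.

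Combining these, $\left\Vert x - \tilde\mu^{(i+1)}\right\Vert_2^2 \leq E\left(1 - \tfrac{(1-\gamma)^2(E - 2\epsilon_{\operatorname{RL}})}{k}\right)$, so it remains to check
\[
  \frac{k - (1-\gamma)^2(E - 2\epsilon_{\operatorname{RL}})}{k} \;\leq\; \frac{\left(\sqrt k + (1-\gamma)\sqrt{\epsilon_{\operatorname{RL}}/2}\right)^2}{k + (1-\gamma)^2(E - \epsilon_{\operatorname{RL}})}.
\]
Both denominators are positive (since $(1-\gamma)^2\epsilon_{\operatorname{RL}} < 1 \leq k$), so I would cross-multiply and, writing $u = k$, $v = (1-\gamma)^2 E$, $w = (1-\gamma)^2\epsilon_{\operatorname{RL}}$, reduce the inequality to $\tfrac{uw}{2} - v^2 + 3vw - 2w^2 \leq u\sqrt{2uw}$; bounding $-v^2 + 3vw \leq \tfrac{9w^2}{4}$ by completing the square and using $w \leq u$ then finishes it, and taking square roots gives the stated ratio bound. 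Finally, $\tilde\mu^{(i+1)} = a + \alpha d$ is a convex combination of $a$ and $b$: $\alpha \geq 0$ follows from $c \geq \epsilon_{\operatorname{RL}}$, and $\alpha \leq 1$ follows from $0 \leq \left\Vert x - b\right\Vert_2^2 = E - 2c + D$, which gives $D \geq 2c - E \geq c - \epsilon_{\operatorname{RL}}$ using $c \geq E - \epsilon_{\operatorname{RL}}$ once more.

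The step I expect to be the main obstacle is the $\epsilon_{\operatorname{RL}}$ bookkeeping: arranging the projection offset and the chain of estimates so that the cross term collapses exactly into $c^2 - \epsilon_{\operatorname{RL}}^2$ and the final expression matches the precise published form (with $\sqrt{\epsilon_{\operatorname{RL}}/2}$ in the numerator and $E - \epsilon_{\operatorname{RL}}$ under the root, rather than some looser equivalent quantity), together with pinning down the two standing facts used above — that $\left\Vert\hat\mu_E - \bar\mu^{(i)}\right\Vert_2^2 \geq 2\epsilon_{\operatorname{RL}}$ whenever the lemma is applied, and that $\hat\mu_E$ may be treated as a point of the achievable feature-expectation polytope.
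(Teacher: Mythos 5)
Your proposal is correct, and it shares the paper's geometric skeleton: work in coordinates centered at $\bar\mu^{(i)}$, use the exact projection formula with the $\epsilon_{\operatorname{RL}}$ offset, lower-bound the inner product $c=(\hat\mu_E-\bar\mu^{(i)})\cdot(\mu^{(i+1)}-\bar\mu^{(i)})$ via $\epsilon_{\operatorname{RL}}$-optimality (together with treating $\hat\mu_E$ as achievable), bound distances in $\tilde M$ by $\sqrt k/(1-\gamma)$, and prove the convex-combination claim from $\alpha\in[0,1]$ exactly as the paper does. Where you genuinely diverge is the middle of the argument: you use the identity $\Vert \hat\mu_E-\tilde\mu^{(i+1)}\Vert_2^2=E-(c^2-\epsilon_{\operatorname{RL}}^2)/D$ plus $c\geq E-\epsilon_{\operatorname{RL}}$ and $D\leq k/(1-\gamma)^2$ to reach the intermediate bound $1-(1-\gamma)^2(E-2\epsilon_{\operatorname{RL}})/k$, and then close the gap to the stated expression by a separate scalar inequality (cross-multiplying, completing the square in $v$, and using $w\leq u$, i.e.\ $(1-\gamma)^2\epsilon_{\operatorname{RL}}\leq k$); the paper instead pushes the exact squared ratio through a chain of termwise bounds --- $(\mu^{(i+1)}\cdot\hat\mu_E-\hat\mu_E\cdot\hat\mu_E)^2\geq 0$, $\epsilon_{\operatorname{RL}}^2/E\leq\epsilon_{\operatorname{RL}}/2$, the optimality cross-term bound, and monotonicity in $\Vert\mu^{(i+1)}-\hat\mu_E\Vert_2^2\leq k/(1-\gamma)^2$ --- landing directly on $\bigl(k+(1-\gamma)^2\epsilon_{\operatorname{RL}}/2\bigr)/\bigl(k+(1-\gamma)^2(E-\epsilon_{\operatorname{RL}})\bigr)$, after which the stated form follows by subadditivity of the square root. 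I checked your closing inequality $\frac{uw}{2}-v^2+3vw-2w^2\leq u\sqrt{2uw}$ and it does hold under your hypotheses, so your route is sound; its only cost is an intermediate bound that is in places slightly weaker than the paper's, which is harmless since both sit below the stated bound, and its benefit is a cleaner, more modular derivation. Finally, the two standing facts you flag --- that the hypothesis is really used as $\Vert\hat\mu_E-\bar\mu^{(i)}\Vert_2^2\geq 2\epsilon_{\operatorname{RL}}$ (the paper invokes it after shifting the origin to $\bar\mu^{(i)}$) and that $\hat\mu_E$ is treated as lying in $\tilde M$ --- are precisely the assumptions the paper's own proof uses implicitly, so your bookkeeping is consistent with it.
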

\begin{proof}
For simplicity of notation, we let the origin of our coordinate system coincide with $\bar\mu^{(i)}$. Then, 
\begin{align}
    & \nonumber \frac{\left(\tilde\mu^{(i+1)} - \hat\mu_E\right)\cdot \left(\tilde\mu^{(i+1)} - \hat\mu_E\right)}{\hat\mu_E - \hat\mu_E} \\
    & \label{eq:1}  = \frac{\mu^{(i+1)} \cdot \mu^{(i+1)} +  \frac{ \epsilon^2_{\operatorname{RL}} -\left(\mu^{(i+1)}\cdot \hat\mu_E \right)^2 }{\hat\mu_E \cdot \hat\mu_E}}{\mu^{i+1} \cdot \mu^{(i+1)}} \\
    & \label{eq:2} \leq \frac{\mu^{(i+1)} \cdot \mu^{(i+1)} +  \epsilon_{\operatorname{RL}}/2 - 2\left(\mu^{(i+1)} \cdot \hat\mu_E\right) +  \hat\mu_E \cdot \hat\mu_E}{\mu^{(i+1)} \cdot \mu^{(i+1)}} \\
    & = \frac{\left(\mu^{(i+1)} - \hat\mu_E\right) \cdot \left(\mu^{(i+1)} - \hat\mu_E\right) + \epsilon_{\operatorname{RL}}/2}{\left(\mu^{(i+1)} - \hat\mu_E\right) \cdot \left(\mu^{(i+1)} - \hat\mu_E\right) +  \hat\mu_E \cdot \hat\mu_E + 2\left(\mu^{(i+1)} - \hat\mu_E\right)\cdot \hat\mu_E} \\
    & \label{eq:4} \leq \frac{\left(\mu^{(i+1)} - \hat\mu_E\right) \cdot \left(\mu^{(i+1)} - \hat\mu_E\right) + \epsilon_{\operatorname{RL}}/2}{\left(\mu^{(i+1)} - \hat\mu_E\right) \cdot \left(\mu^{(i+1)} - \hat\mu_E\right) + \hat\mu_E \cdot \hat\mu_E - \epsilon_{\operatorname{RL}}} \\
    & \leq \frac{k /(1 - \gamma)^2 + \epsilon_{\operatorname{RL}}/2}{k/(1 - \gamma)^2 + \hat\mu_E \cdot \hat\mu_E - \epsilon_{\operatorname{RL}}}
\end{align} 
where we use in the order: 
\begin{enumerate}
    \item The definition of $\tilde \mu^{(i+1)} = \frac{\hat\mu_E \cdot \mu^{(i+1)} - \epsilon_{\operatorname{RL}}}{\mu^{(i+1)}\cdot \mu^{(i+1)}}\mu^{(i+1)}$, which gives for the numerator 
    \begin{align*}
        & \left(\tilde\mu^{(i+1)} - \hat\mu_E\right) \cdot \left(\tilde\mu^{(i+1)} - \hat\mu_E\right)\\
        & = \left(\frac{\hat\mu_E \cdot \mu^{(i+1)} -\epsilon_{\operatorname{RL}}}{\mu^{(i+1)}\cdot \mu^{(i+1)}}\mu^{(i+1)} - \hat\mu_E \right) \cdot \left(\frac{\hat\mu_E \cdot \mu^{(i+1)} - \epsilon_{\operatorname{RL}}}{\mu^{(i+1)}\cdot \mu^{(i+1)}}\mu^{(i+1)} - \hat\mu_E \right)\\
        & = \frac{\left(\hat\mu_E \cdot \mu^{(i+1)} -\epsilon_{\operatorname{RL}}\right)^2}{\left(\mu^{(i+1)}\cdot \mu^{(i+1)}\right)^2}\mu^{(i+1)}\cdot \mu^{(I+1)} - 2\frac{\hat\mu_E \cdot \mu^{(i+1)} -\epsilon_{\operatorname{RL}}}{\mu^{(i+1)}\cdot \mu^{(i+1)}}\mu^{(i+1)}\cdot \hat\mu_E + \hat\mu_E \cdot \hat\mu_E \\
        & = \frac{\left(\hat\mu_E \cdot \mu^{(i+1)}\right)^2 - 2\epsilon_{\operatorname{RL}} \hat\mu_E \cdot \mu^{(i+1)} +\epsilon^2_{\operatorname{RL}}}{\left(\mu^{(i+1)}\cdot \mu^{(i+1)}\right)^2}\mu^{(i+1)}\cdot \mu^{(I+1)} - 2\frac{\hat\mu_E \cdot \mu^{(i+1)} -\epsilon_{\operatorname{RL}}}{\mu^{(i+1)}\cdot \mu^{(i+1)}}\mu^{(i+1)}\cdot \hat\mu_E \\
        & + \hat\mu_E \cdot \hat\mu_E \\
        & = \frac{\epsilon^2_{\operatorname{RL}}}
        {\mu^{(i+1)}\cdot \mu^{(i+1)}} - \frac{\left(\mu^{(i+1)}\cdot \hat\mu_E\right)^2}{\mu^{(i+1)}\cdot \mu^{(i+1)}} + \hat\mu_E \cdot \hat\mu_E \\
    \end{align*}
    Using this expression as the numerator, and multiplying both numerator and denominator by $\frac{\mu^{(i+1)} \cdot \mu^{(i+1)}}{\hat\mu_E \cdot \hat\mu_E}$ gives Equation~\ref{eq:1}. 
    \item To get Equation~\ref{eq:2}, note that 
    \begin{align*}
        & \left(\mu^{(i+1)} \cdot \hat\mu_E - \hat\mu_E \cdot \hat\mu_E\right)^2 \geq 0\\
        & \left(\mu^{(i+1)} \cdot \hat\mu_E\right)^2 - 2\left(\mu^{(i+1)} \cdot \hat\mu_E\right)\left(\hat\mu_E \cdot \hat\mu_E \right) + \left(\hat\mu_E \cdot \hat\mu_E\right)^2 \geq 0\\
        & - \left(\mu^{(i+1)} \cdot \hat\mu_E\right)^2 \leq - 2 \left(\mu^{(i+1)} \cdot \hat\mu_E\right)\left(\hat\mu_E \cdot \hat\mu_E \right) + \left(\hat\mu_E \cdot \hat\mu_E\right)^2 \\
        & \frac{- \left(\mu^{(i+1)} \cdot \hat\mu_E\right)^2}{\hat\mu_E \cdot \hat\mu_E} \leq -2\left(\mu^{(i+1)} \cdot \hat\mu_E\right) + \hat\mu_E \cdot \hat\mu_E \\
    \end{align*}
    and $\frac{\epsilon^2_{\operatorname{RL}}}{\hat\mu_E \cdot \hat\mu_E } \leq \frac{
    \epsilon^2_{\operatorname{RL}}}{2\epsilon_{\operatorname{RL}}} = \frac{\epsilon_{\operatorname{RL}}}{2}$
    by the assumption that $\hat\mu_E \cdot \hat\mu_E \geq 2\epsilon_{\operatorname{RL}}$. 
    \item For the numerator, use the fact that $\left(\mu^{(i+1)} - \hat\mu_E\right) \cdot \left(\mu^{(ii+1)} - \hat\mu_E\right) = \mu^{(i+1)} \cdot \mu^{(i+1)} - 2 \mu^{(i+1)} \cdot \hat\mu_E + \hat\mu_E \cdot \hat\mu_E$. For the denominator, we rewrite it as follows: 
    \begin{align*}
        \mu^{(i+1)} \cdot \mu^{(i+1)}
        & = \left(\mu^{(i+1)} - \hat\mu_E + \hat\mu_E\right) \cdot \left(\mu^{(i+1)} - \hat\mu_E + \hat\mu_E\right) \\
        & = \left(\mu^{(i+1)} - \hat\mu_E \right) \cdot \left(\mu^{(i+1)} - \hat\mu_E \right) + \left(\hat\mu_E \cdot \hat\mu_E\right) + 2\left(\mu^{)(i+1)} - \hat\mu_E\right)\cdot \hat\mu_E. 
    \end{align*}
    \item Since $\tilde\pi^{(i+1)}$ is ann $\epsilon_{\operatorname{RL}}$-optimal policy, we have 
    \begin{align*}
        \hat\mu_E \cdot \mu^{(i+1)} \geq \argmax_\pi  \hat\mu_E \cdot \mu(\pi) - \epsilon_{\operatorname{RL}} \geq \hat\mu_E \cdot \hat\mu_E - \epsilon_{\operatorname{RL}}, 
    \end{align*}
    which implies that $2\left(\mu^{(i+1)}- \hat\mu_E \right)\cdot \hat\mu_E \geq - \epsilon_{\operatorname{RL}}$, which in turn implies Equation~\ref{eq:4}. 
    \item Recall that $\phi\in[0, 1]^k$ and hence all $\mu\in[0, \frac{1}{1 - \gamma}]^k = \tilde M$. All points considered lie in $\tilde M$ so their $\ell_2$-norms are bounded by $\frac{\sqrt k}{1 - \gamma}$. 
\end{enumerate}
Now, we proof that $\tilde \mu^{(i+1)} = \lambda \mu^{(i+1)} + (1 - \lambda) \bar\mu^{(i)}$, for some $\lambda \in [0, 1]$. It is easy to see that, from the definition of $\tilde\mu^{(i+1)}$, by setting $\lambda = \frac{\hat\mu_E \cdot \mu^{(i+1)} - \epsilon_{\operatorname{RL}}}{\mu^{(i+1)}\cdot \mu^{(i+1)}}$, we have $\tilde \mu^{(i+1)} = \lambda \mu^{(i+1)} + (1 - \lambda) \bar\mu^{(i)}$\footnote{Keeping in mind that we use $\bar\mu^{(i)}$ as the original for notation simplicity.}. Since $\tilde\pi^{(i+1)}$ satisfies 
    \begin{align*}
        \hat\mu_E \cdot \mu\left(\tilde \pi^{(i+1)} \right) = \hat\mu_E \cdot \mu^{(i+1)} \geq \argmax_\pi  \hat\mu_E \cdot \mu(\pi) - \epsilon_{\operatorname{RL}} \geq \hat\mu_E \cdot \hat\mu_E - \epsilon_{\operatorname{RL}}, 
    \end{align*}
    we have 
    \begin{align*}
        \hat\mu_E \cdot \mu^{(i+1)} - \epsilon_{\operatorname{RL}} \geq \hat\mu_E \cdot \hat\mu_E - 2\epsilon_{\operatorname{RL}} \geq 0
    \end{align*}
    implying that $\lambda \geq 0$. On the other hand, observe that 
    \begin{align*}
        & - \left(\mu^{(i+1)} - \hat\mu_E\right) \left(\mu^{(i+1)} - \hat\mu_E\right) \leq 0\\
        & - \mu^{(i+1)} \cdot \mu^{(i+1)} + 2 \mu^{(i+1)} \cdot \hat\mu_E - \hat\mu_E \cdot \hat\mu_E \leq 0 \\
        & 2 \mu^{(i+1)} \cdot \hat\mu_E -  \hat\mu_E \cdot \hat\mu_E \leq \mu^{(i+1)} \cdot \mu^{(i+1)}. 
    \end{align*}
    Subtracting both sides by  $\mu^{(i+1)}\cdot\hat\mu_E$ gives
    \begin{align*}
    \mu^{(i+1)}\cdot\mu^{(i+1)} - \mu^{(i+1)}\cdot\hat\mu_E 
        & \geq 2\mu^{(i+1)}\cdot \hat\mu_E - \hat\mu_E \cdot \hat\mu_E - \mu^{(i+1)}\cdot\hat\mu_E \\
        & = \mu^{(i+1)}\cdot\hat\mu_E  - \hat\mu_E \cdot \hat\mu_E \\
        & \geq -\epsilon_{\operatorname{RL}}. 
    \end{align*}
    Dividing throughout by $\mu^{(i+1)}\cdot \mu^{(i+1)}$ and rearranging the terms gives $\lambda \leq 1$. 
\end{proof} 

In the theorem below, we prove the convergence guarantee of Algorithm~\ref{AL}. 
\begin{theorem}[Convergence guarantee of~Algorithm~\ref{AL}]\label{thm:iterations}
    Let an $MDP\backslash R$, features $\phi:\mathcal S \rightarrow [0, 1]^k$, and any $\epsilon_{\operatorname{RL}}, \epsilon\in (0, 1)$ such that $\epsilon \geq \sqrt{\epsilon_{\operatorname{RL}}}$ be given. Then Algorithm~\ref{AL} will terminate after at most
    \begin{align*}
        O\left(\frac{k}{(1 - \gamma)^2 (\epsilon^2 - \epsilon_{\operatorname{RL}})}\log \frac{k}{(1 - \gamma)^2\epsilon^2}\right)
    \end{align*}
    iterations. 
\end{theorem}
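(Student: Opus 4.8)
The plan is to iterate Lemma~\ref{lem:lemma_3} and track the quantity $d_i \coloneqq \left\Vert \hat\mu_E - \bar\mu^{(i)}\right\Vert_2$, showing it shrinks geometrically (up to a floor term) until it drops below $\epsilon$, at which point the algorithm's termination test $t^{(i)}\le\epsilon$ fires. First I would observe that the stopping condition is governed by $d_i$: since $\bar\mu^{(i)}\in\tilde M$ lies in the convex hull of feature expectations found so far, $t^{(i)} = \max_{\Vert w\Vert_2\le 1}\min_j w^T(\hat\mu_E-\mu^{(j)})$ is (by a standard minimax/projection argument, as in the analysis of~\cite{abbeel2004apprenticeship}) at most $\Vert\hat\mu_E-\bar\mu^{(i)}\Vert_2 = d_i$, so it suffices to bound the number of iterations needed to reach $d_i\le\epsilon$. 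Note also $\bar\mu^{(i+1)}$ minimizes the distance to $\hat\mu_E$ over $\operatorname{Co}\{\bar\mu^{(i)},\mu^{(i+1)}\}$, hence $d_{i+1}\le \Vert\hat\mu_E-\tilde\mu^{(i+1)}\Vert_2$, so Lemma~\ref{lem:lemma_3} gives
\begin{align*}
    \frac{d_{i+1}}{d_i} \;\le\; \frac{\sqrt k + (1-\gamma)\sqrt{\epsilon_{\operatorname{RL}}/2}}{\sqrt{k + (1-\gamma)^2(d_i^2 - \epsilon_{\operatorname{RL}})}}.
\end{align*}

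The core of the argument is then purely a one-dimensional recursion analysis. While $d_i > \epsilon \ge \sqrt{\epsilon_{\operatorname{RL}}}$, we have $d_i^2-\epsilon_{\operatorname{RL}} \ge \epsilon^2-\epsilon_{\operatorname{RL}} > 0$, so the denominator is bounded below by $\sqrt{k+(1-\gamma)^2(\epsilon^2-\epsilon_{\operatorname{RL}})}$; I would also bound the numerator's stray term $(1-\gamma)\sqrt{\epsilon_{\operatorname{RL}}/2} \le \sqrt k$ crudely (or more carefully, absorb it), so that each step contracts $d_i$ by a factor $c \le \sqrt{k}/\sqrt{k+(1-\gamma)^2(\epsilon^2-\epsilon_{\operatorname{RL}})} \cdot(1+o(1))$, i.e. $c = \left(1 + \tfrac{(1-\gamma)^2(\epsilon^2-\epsilon_{\operatorname{RL}})}{k}\right)^{-1/2}$ up to constants. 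Starting from $d_0 \le \tfrac{\sqrt k}{1-\gamma}$ (all feature expectations lie in $[0,\tfrac{1}{1-\gamma}]^k$), the number of iterations to bring $d_i$ from $d_0$ down to $\epsilon$ is at most $\log(d_0/\epsilon)/\log(1/c)$. Using $\log(1/c) = \tfrac12\log\!\left(1 + \tfrac{(1-\gamma)^2(\epsilon^2-\epsilon_{\operatorname{RL}})}{k}\right) = \Theta\!\left(\tfrac{(1-\gamma)^2(\epsilon^2-\epsilon_{\operatorname{RL}})}{k}\right)$ (valid since the argument is $O(1)$), and $\log(d_0/\epsilon) = \tfrac12\log\tfrac{k}{(1-\gamma)^2\epsilon^2}$, yields the claimed bound $O\!\left(\tfrac{k}{(1-\gamma)^2(\epsilon^2-\epsilon_{\operatorname{RL}})}\log\tfrac{k}{(1-\gamma)^2\epsilon^2}\right)$.

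The main obstacle is handling the additive $\epsilon_{\operatorname{RL}}$-error term in the numerator cleanly: unlike the exact case of~\cite{abbeel2004apprenticeship}, the contraction factor is not uniformly bounded away from $1$ once $d_i$ is close to $\sqrt{\epsilon_{\operatorname{RL}}}$, which is precisely why we need the hypothesis $\epsilon\ge\sqrt{\epsilon_{\operatorname{RL}}}$ and why the bound degrades as $\epsilon^2-\epsilon_{\operatorname{RL}}\to 0$. I would deal with this by only running the recursion down to the threshold $\epsilon$ (never closer), verifying that on the whole range $d_i\in[\epsilon, d_0]$ the per-step contraction factor is uniformly $\le c<1$ with $1-c = \Theta\!\left(\tfrac{(1-\gamma)^2(\epsilon^2-\epsilon_{\operatorname{RL}})}{k}\right)$, and separately checking that the lemma's precondition $\Vert\hat\mu_E\Vert_2^2\ge 2\epsilon_{\operatorname{RL}}$ is maintained (this follows since $\hat\mu_E\approx\mu_E$ and the algorithm is only invoked in the regime $\epsilon\ge\sqrt{\epsilon_{\operatorname{RL}}}$, or can be taken as a standing assumption). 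A secondary technical point is justifying $d_{i+1}\le\Vert\hat\mu_E-\tilde\mu^{(i+1)}\Vert_2$ and $t^{(i)}\le d_i$ rigorously via the projection-onto-convex-hull characterization; both are routine but should be stated explicitly.
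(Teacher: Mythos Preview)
Your proposal is correct and follows essentially the same route as the paper: iterate Lemma~\ref{lem:lemma_3}, replace $d_i^2-\epsilon_{\operatorname{RL}}$ by the worst case $\epsilon^2-\epsilon_{\operatorname{RL}}$ while $d_i>\epsilon$, start from $d_0\le \sqrt{k}/(1-\gamma)$, and solve the resulting geometric recursion. The paper does exactly this (redefining $t^{(i)}=\Vert\hat\mu_E-\bar\mu^{(i)}\Vert_2$, bounding the contraction factor, and using $\log(1+x)\asymp x$), so your outline matches; if anything, you are slightly more careful than the paper in flagging the need to justify $t^{(i)}\le d_i$, to check Lemma~\ref{lem:lemma_3}'s precondition, and to use a two-sided $\Theta$ bound on $\log(1/c)$ rather than a one-sided inequality.
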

\begin{proof}
    Lemma~\ref{lem:lemma_3} gives a construction of $\tilde\mu^{(i+1)}\in\tilde M^{(i+1)}$ with a distance to $\hat\mu_E$ that is at most a factor given by the bound in the lemma statement. As long as $\bar\mu^{(i)}$ in the current iteration $i$ satisfies $\left\Vert \hat\mu_E - \bar\mu^{(i)}\right\Vert_2\geq \epsilon$, we have $\frac{\Vert \hat\mu_E - \tilde\mu^{(i+1)\Vert_2}}{\Vert \hat\mu_E - \bar\mu^{(i)}\Vert_2}\leq  \frac{\sqrt k + (1 - \gamma)\sqrt{\epsilon_{\operatorname{RL}}/2}}{\sqrt{k + (1 - \gamma)^2 \left(\epsilon^2 - \epsilon_{\operatorname{RL}}\right)}}$. 

    Let $t^{(i)} = \left\Vert\bar\mu^{(i)} - \hat\mu_E \right\Vert_2$. Define $\tilde\mu^{(i+1)}$ as in Lemma~\ref{lem:lemma_3} and observing that $\tilde\mu^{(i+1)}\in\tilde M^{(i+1)}$, then by definition,  $t^{(i+1)} \leq \left\Vert\tilde\mu^{(i+1)} - \hat\mu_E\right\Vert_2$. Since the maximum distance in $\tilde M$ is $\frac{\sqrt k}{1 - \gamma}$, we have 
    \begin{align*}
        t^{(i)} 
        & \leq \left(\frac{\sqrt k + (1 - \gamma)\sqrt{\epsilon_{\operatorname{RL}}/2}}{\sqrt{k + (1 - \gamma)^2 \left(\epsilon^2 - \epsilon_{\operatorname{RL}}\right)}}\right)^i \cdot t^{(0)}\\
        & \leq \left(\frac{\sqrt k + (1 - \gamma)\sqrt{\epsilon_{\operatorname{RL}}/2}}{\sqrt{k + (1 - \gamma)^2 \left(\epsilon^2 - \epsilon_{\operatorname{RL}}\right)}}\right)^i \cdot \frac{\sqrt k}{1 - \gamma}. 
    \end{align*}
    So, $t^{(i)} \leq \epsilon$ when 
    \begin{align*}
        \left(\frac{\sqrt k + (1 - \gamma)\sqrt{\epsilon_{\operatorname{RL}}/2}}{\sqrt{k + (1 - \gamma)^2 \left(\epsilon^2 - \epsilon_{\operatorname{RL}}\right)}}\right)^i \cdot \frac{\sqrt k}{1 - \gamma}. \leq\epsilon, 
    \end{align*}
    which is equivalent to 
    \begin{align}\label{eq:a/b}
        i \geq \log \left(\frac{\sqrt k}{(1 - \gamma)\epsilon}\right)/\log \left(\frac{\sqrt{k + (1 - \gamma)^2 (\epsilon^2 - \epsilon_{\operatorname{RL}})}}{\sqrt k + (1 - \gamma) \sqrt{\epsilon_{\operatorname{RL}}/2}}\right)
    \end{align}
    The denominator can be bounded by 
    \begin{align}
        \nonumber \log \left(\frac{\sqrt{k + (1 - \gamma)^2 (\epsilon^2 - \epsilon_{\operatorname{RL}})}}{\sqrt k + (1 - \gamma) \sqrt{\epsilon_{\operatorname{RL}}/2}}\right) 
        & \nonumber \leq \log \left(\frac{\sqrt{k + (1 - \gamma)^2 (\epsilon^2 - \epsilon_{\operatorname{RL}})}}{\sqrt k}\right)\\
        & \nonumber = \frac{1}{2} \log \left(\frac{k + (1 - \gamma)^2 (\epsilon^2 - \epsilon_{\operatorname{RL}})}{k}\right)\\
        & \label{eq:/b} \leq \frac{1}{2} \cdot \frac{(1 - \gamma)^2 (\epsilon^2 - \epsilon_{\operatorname{RL}})}{k}
    \end{align}
    where the first inequality is due to the fact that $(1 - \gamma)\sqrt{\epsilon_{\operatorname{RL}}/2}\geq 0$ and in the second inequality, we used $\ln(1 +x)\leq x$ for $x > -1$. Combining Equations~\ref{eq:a/b} and~\ref{eq:/b}, we obtain 
    \begin{align*}
        i \geq O\left(\frac{k}{(1 - \gamma)^2 (\epsilon^2 - \epsilon_{\operatorname{RL}})}\log \frac{k}{(1 - \gamma)^2\epsilon^2}\right). 
    \end{align*}
\end{proof}
\section{Approximate apprenticeship learning}\label{sec:approximate_AL}
In this section, we give an approximate classical algorithm for apprenticeship learning and present its convergence and time complexity analysis. We begin by introducing the classical subroutines that will be used in our algorithm. 

\subsection{Classical subroutines}\label{sec:classical_subroutines}
As noted in the work of~\cite{abbeel2004apprenticeship}, the optimization problem in Line~\ref{line:0SVM} of Algorithm~\ref{AL}
\begin{align}\label{eqn:SVM_problem}
    \displaystyle\argmax_{w:\Vert w\Vert_2\leq 1}\min_{j\in\{0, \cdots, (i-1)\}} w^T\left(\hat \mu_E - \mu^{(j)}\right)
\end{align}
can be solved using an SVM solver, where the approximate expert's feature expectations $\hat \mu_E$ are given the label 1 while feature expectations $\left\{\mu^{(\pi^{(j)})}:j = 0, \cdots, (i-1)\right\}$  are given the label -1. The vector $w$ that we seek is the unit vector orthogonal to the maximum margin separating hyperplane. 

The result below from~\cite{clarkson2012sublinear} returns an approximation of the vector $w$, the unit vector orthogonal to the maximum margin separating  hyperplane. 
\begin{fact}[\cite{clarkson2012sublinear}]\label{fact:SVM}
    There exists a classical algorithm that, given a data matrix $X\in\mathbb R^{n\times k}$, returns a vector $\bar w\in\mathbb R^k$ such that 
    \begin{align}\label{eqn:barw}
        X_i\bar w\geq \max_{w}\min_{i'\in[n]}X_{i'}w - \epsilon, \quad \forall i\in[n]
    \end{align}
    with probability at least 2/3, in $\tilde O\left(\frac{ n+k}{\epsilon^2} \right)$ time. 
\end{fact}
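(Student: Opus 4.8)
The plan is to reconstruct the sublinear primal--dual argument of~\cite{clarkson2012sublinear}. Writing $\sigma^* \coloneqq \max_{w:\|w\|_2\le 1}\min_{i\in[n]}X_i w = \max_{w:\|w\|_2\le 1}\min_{p\in\Delta_n} p^T X w$ (the margin, where $\Delta_n$ is the probability simplex), the goal is to output a feasible $\bar w$ with $\|\bar w\|_2\le 1$ that is an $\epsilon$-approximate maximin strategy, i.e. $\min_i X_i\bar w\ge \sigma^*-\epsilon$, which is exactly~\eqref{eqn:barw}. I would solve this zero-sum game by running no-regret dynamics: the dual player maintains a distribution $p_t$ over the $n$ rows and updates it by multiplicative weights (Hedge) against the loss vector $(X_i w_t)_{i\in[n]}$; the primal player maintains $w_t$ in the unit ball and updates it by an online gradient step along $X^T p_t$ followed by Euclidean projection back onto the ball. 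Averaging, $\bar w \coloneqq \frac1T\sum_{t\le T} w_t$ is an approximate equilibrium strategy, and the standard regret bounds (Hedge regret $O(\sqrt{T\log n})$, online-gradient regret $O(\sqrt T)$ with unit-norm gradients) force $T = \Theta(\epsilon^{-2}\log n)$ for an $\epsilon$-approximate solution.

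The point of the construction is to avoid the $\Theta(nk)$ per-round cost of forming $X^T p_t$ and all $n$ inner products $\langle X_i,w_t\rangle$ exactly. First I would replace the primal gradient $X^T p_t$ by the single sampled row $X_{i_t}$ with $i_t\sim p_t$: this is unbiased, has norm at most $1$, and costs $O(k)$. Second, I would estimate each dual loss $\langle X_i,w_t\rangle$ by $\ell_2$-sampling a coordinate $j$ with probability $w_t(j)^2/\|w_t\|_2^2$ and using $X_i(j)\,\|w_t\|_2^2/w_t(j)$, clipped to a bounded range so that the exponential Hedge updates remain stable; touching all $n$ weights then costs $O(n)$ per round. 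With $T=\tilde O(\epsilon^{-2})$ rounds at $O(n+k)$ work each, the total running time is $\tilde O\!\left(\frac{n+k}{\epsilon^2}\right)$ --- sublinear in the input size $nk$.

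The main obstacle, and the technical core I expect to inherit from~\cite{clarkson2012sublinear}, is showing that the regret guarantees survive the use of these noisy, clipped estimators with constant probability. This requires a concentration argument: the per-round errors between the true quantities and their estimates form martingale difference sequences; after clipping their increments are bounded and their conditional variances are controlled by the $\ell_2$-sampling second-moment bound $\mathbb E[(\text{estimate})^2]\le \|X_i\|_2^2\le 1$ together with $\|w_t\|_2\le 1$, so a Freedman/Azuma-type inequality bounds the accumulated deviation by $O(\epsilon T)$ except with small probability. One also has to check that clipping introduces only $O(\epsilon)$ bias per round (safe because the true losses lie in $[-1,1]$ and the estimator's tail is light), and that the lazy projection keeps $\|\bar w\|_2\le 1$. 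Combining the two regret bounds with these two error bounds and equating the total slack to $\epsilon T$ pins down $T=\Theta(\epsilon^{-2}\log n)$ and yields the stated running time; the $2/3$ success probability falls out of the concentration step, and can be amplified by taking medians over independent repetitions.
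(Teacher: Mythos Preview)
Your sketch is a faithful reconstruction of the Clarkson--Hazan--Woodruff primal--dual sampling argument and is correct in outline. Note, however, that the paper does not prove this statement at all: it is recorded as a \emph{Fact} with a bare citation to~\cite{clarkson2012sublinear} and used as a black box, so there is no ``paper's own proof'' to compare against --- your proposal simply supplies what the paper deliberately outsources.
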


Recall that $\bar\mu^{(i)}\coloneqq \argmin_{\mu\in \operatorname{Co}\{\bar\mu^{(i)}, \mu^{(i+1)}\}} \left\Vert \hat\mu_E - \mu\right\Vert_2$. We show the following lemma on the Euclidean distance between $\bar w $ and the optimal solution $w^*$ of Fact~\ref{fact:SVM}. 
\begin{lemma}\label{lem:barw_w}
    Let $\epsilon\in (0, 1)$. Let $w^*$ be the optimal solution of Eq.(\ref{eqn:SVM_problem}) and let $\bar w$ be the approximate of $w^*$ computed by the classical algorithm in Fact~\ref{fact:SVM}. Then, the correctness guarantee of $\bar w$ in Eq.(\ref{eqn:barw}) implies that
    \begin{align*}
        \left\Vert \bar w - w^*\right\Vert_2\leq \sqrt{\frac{2\epsilon}{\Vert \hat\mu_E - \bar\mu^{(i)}\Vert_2}}. 
    \end{align*}
\end{lemma}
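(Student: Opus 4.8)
\textit{Proof proposal.} The plan is to reduce the lemma to two elementary facts: (i) the optimum $w^*$ of Eq.(\ref{eqn:SVM_problem}) is the \emph{unit} vector pointing along $\hat\mu_E - \bar\mu^{(i)}$, with optimal value $t^* := \|\hat\mu_E - \bar\mu^{(i)}\|_2$; and (ii) the $\epsilon$-approximation guarantee for $\bar w$ forces a lower bound on the inner product $\bar w^T w^*$. The claimed norm bound then falls out of expanding $\|\bar w - w^*\|_2^2$.

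For (i), write $C$ for the convex hull of the feature-expectation vectors $\mu^{(j)}$ appearing in Eq.(\ref{eqn:SVM_problem}), so that by definition $\bar\mu^{(i)}$ is the Euclidean projection of $\hat\mu_E$ onto $C$; set $t^* := \|\hat\mu_E - \bar\mu^{(i)}\|_2$ (nonzero, as otherwise $\hat\mu_E\in C$ and Algorithm~\ref{AL} would already have terminated) and $w_0 := (\hat\mu_E - \bar\mu^{(i)})/t^*$. First I would invoke the first-order optimality condition for the projection, $(\hat\mu_E - \bar\mu^{(i)})^T(\mu - \bar\mu^{(i)}) \le 0$ for all $\mu\in C$, which rearranges to $w_0^T(\hat\mu_E - \mu) \ge t^*$ for all $\mu\in C$; since the minimum of the linear function $w_0^T(\hat\mu_E-\,\cdot\,)$ over $C$ is attained at a vertex, this gives $\min_j w_0^T(\hat\mu_E-\mu^{(j)})\ge t^*$, so the optimal value of Eq.(\ref{eqn:SVM_problem}) is at least $t^*$. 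For the reverse inequality, for any $w$ with $\|w\|_2\le 1$ the minimum over the vertices equals the minimum over $C$, hence is at most $w^T(\hat\mu_E-\bar\mu^{(i)})\le \|\hat\mu_E - \bar\mu^{(i)}\|_2 = t^*$ by Cauchy--Schwarz. Thus the optimal value is exactly $t^*$ and, by the equality case of Cauchy--Schwarz, the maximizer is $w^* = w_0$; in particular $\|w^*\|_2 = 1$ and $\hat\mu_E - \bar\mu^{(i)} = t^* w^*$.

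For (ii), Fact~\ref{fact:SVM} applied with the data matrix whose rows are $\hat\mu_E - \mu^{(j)}$ gives $\bar w^T(\hat\mu_E - \mu^{(j)}) \ge t^* - \epsilon$ for every $j$, hence $\bar w^T(\hat\mu_E - \mu)\ge t^*-\epsilon$ for every $\mu\in C$, and in particular for $\mu = \bar\mu^{(i)}\in C$: using $\hat\mu_E - \bar\mu^{(i)} = t^* w^*$ this reads $t^*\,\bar w^T w^* \ge t^* - \epsilon$, i.e. $\bar w^T w^* \ge 1 - \epsilon/t^*$. Finally I would expand, using $\|\bar w\|_2 \le 1$ (the solver of Fact~\ref{fact:SVM} returns a feasible point of the ball $\|w\|_2\le1$) and $\|w^*\|_2 = 1$,
\[
\|\bar w - w^*\|_2^2 \;=\; \|\bar w\|_2^2 + \|w^*\|_2^2 - 2\,\bar w^T w^* \;\le\; 2 - 2\Big(1 - \tfrac{\epsilon}{t^*}\Big) \;=\; \frac{2\epsilon}{t^*},
\]
and take square roots, recalling $t^* = \|\hat\mu_E - \bar\mu^{(i)}\|_2$.

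The main obstacle is step (i): correctly pinning down that the max-margin solution of Eq.(\ref{eqn:SVM_problem}) is the unit vector along $\hat\mu_E - \bar\mu^{(i)}$, which rests on reading Eq.(\ref{eqn:SVM_problem}) as separating the single point $\hat\mu_E$ from the convex hull of the $\mu^{(j)}$'s and on the defining property of $\bar\mu^{(i)}$ as the projection onto that hull. (If preferred, the same identification follows from the minimax theorem by swapping $\max_{\|w\|_2\le1}$ with the $\min$ over the hull and recognizing the inner maximization as a Euclidean norm.) Everything after that is a two-line computation, the only mild subtlety being the harmless use of $\|\bar w\|_2\le 1$ for the solver's output. $\hfill\square$
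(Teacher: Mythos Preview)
Your proposal is correct and follows essentially the same route as the paper: identify $w^*$ as the unit vector along $\hat\mu_E-\bar\mu^{(i)}$ with optimal value $t^*=\|\hat\mu_E-\bar\mu^{(i)}\|_2$, use the $\epsilon$-guarantee of Fact~\ref{fact:SVM} to get $\bar w^T w^*\ge 1-\epsilon/t^*$, and expand $\|\bar w-w^*\|_2^2$. The only difference is that you prove step~(i) in full via the projection optimality condition and Cauchy--Schwarz, whereas the paper simply cites~\cite{abbeel2004apprenticeship} for the identity $w^*=(\hat\mu_E-\bar\mu^{(i)})/\|\hat\mu_E-\bar\mu^{(i)}\|_2$; you also make explicit the passage from the vertices $\mu^{(j)}$ to $\bar\mu^{(i)}\in C$, which the paper leaves implicit.
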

\begin{proof}
    Fix $i\in\mathbb Z_+$. Notice that Equation~\ref{eqn:SVM_problem} can be reformulated as the following optimization problem: 
    \begin{eqnarray}
        \nonumber\text{maximize} & & t\\
        \nonumber\text{subject to} & & w^T\left(\hat\mu_E  -\mu^{(j)}\right)\geq t, \hspace{1cm}\forall  j\in\{0, \cdots (i-1)\}. 
    \end{eqnarray}
    Let $w^*$ be the optimal solution of the above maximization problem. According to~\cite{abbeel2004apprenticeship}, $w^* = \frac{\hat\mu_E - \bar\mu^{(i)}}{\left\Vert \hat\mu_E - \bar\mu^{(i)} \right\Vert_2}$. Furthermore, the algorithm in Fact~\ref{fact:SVM} returns an approximation $\bar w$ of $w^*$ such that 
    \begin{align}\label{eqn:QSVM_result}
        \bar w^T\left(\hat\mu_E - \mu^{(j)}\right)\geq t - \epsilon, \quad \forall j\in\{0, \cdots, (i-1)\}. 
    \end{align}
    By definition of $w^*$, we have $\left(\hat\mu_E - \bar\mu^{(i)}\right) = t\cdot w^*$. Then, it follows that Eq.(\ref{eqn:QSVM_result}) implies
    \begin{align*}
        t\cdot \bar w^T w^*\geq t - \epsilon. 
    \end{align*}
    Dividing both sides of the inequality by $t$ gives 
    \begin{align}\label{eqn:IP_w}
        \bar w^T w^* \geq 1 - \frac{\epsilon}{t}. 
    \end{align}
    Then, 
    \begin{align*}
        \left\Vert \bar w - w^*\right\Vert_2^2 = \Vert \bar w\Vert_2^2 - 2 \displaystyle\sum_{i=1}^k \bar w_i\cdot w^*_i + \Vert  w^* \Vert_2^2 \leq 2 - 2\left(1- \frac{\epsilon}{t}\right) = \frac{2\epsilon}{t}, 
    \end{align*}
    where the inequality is due to Eq.(\ref{eqn:IP_w}) and the fact that $\left\Vert \bar w\right\Vert_2, \left\Vert w^*\right\Vert_2\leq 1$. Taking the square root on both sides of the inequality gives 
    \begin{align*}
        \left\Vert \bar w - w\right\Vert_2\leq \sqrt{\frac{2\epsilon}{t}}\leq \sqrt{\frac{2\epsilon}{\left\Vert \hat\mu_E - \bar\mu^{(i)}\right\Vert_2}}. 
    \end{align*}
\end{proof}

Classical multivariate Monte Carlo can be used to estimate feature expectations. The following is a result on multivariate Monte Carlo that assumes sampling access to the random variable whose mean we want to estimate. We rephrase this result to accuracy in the $\ell_2$ norm. 
\begin{fact}[Rephrased \citep{jerbi2022quantum}]\label{fact:ME}
    Let $\epsilon, \delta\in (0, 1)$. Let $v\in\mathbb R^k$ be such that $\Vert v\Vert_2\leq L$. Given sampling access to $v$, there exists a classical multivariate mean estimator that returns an estimate $\tilde \mu$ of $\mu = \mathbb E[v]$ such that $\Vert \tilde \mu - \mu\Vert_2\leq\epsilon$ with success probability at least $1 - \delta$ using $\tilde O \left(\frac{L^2k}{\epsilon^2}\right)$ samples of $v$. 
\end{fact}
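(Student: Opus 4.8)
\textit{Proof proposal.} The plan is to realize the estimator explicitly as an empirical mean equipped with median-of-means confidence boosting, and to control its error by a single second-moment computation together with Chebyshev's inequality. Concretely, draw $N$ i.i.d. copies $v_1,\dots,v_N$ of $v$, partition them into $m=\Theta\!\left(\log(k/\delta)\right)$ groups of equal size $s=\left\lceil 4kL^2/\epsilon^2\right\rceil$, form the group means $\hat\mu^{(\ell)}=\frac1s\sum_{i\in\text{group }\ell}v_i$ for $\ell\in[m]$, and output the coordinatewise median $\tilde\mu$ defined by $\tilde\mu(j)=\operatorname{median}_{\ell\in[m]}\hat\mu^{(\ell)}(j)$. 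The total sample count is $N=ms=\tilde O\!\left(L^2k/\epsilon^2\right)$, as claimed, so it remains to verify the accuracy guarantee.

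The first key step is the variance bound. For each coordinate $j\in[k]$ and each group $\ell$, we have $\mathbb E[\hat\mu^{(\ell)}(j)]=\mu(j)$ and $\operatorname{Var}(\hat\mu^{(\ell)}(j))=\operatorname{Var}(v(j))/s$. Since $\operatorname{Var}(v(j))\le\mathbb E[v(j)^2]\le\sum_{j'=1}^k\mathbb E[v(j')^2]=\mathbb E\!\left[\|v\|_2^2\right]\le L^2$, Chebyshev's inequality gives $\Pr\!\left[\,\lvert\hat\mu^{(\ell)}(j)-\mu(j)\rvert>\epsilon/\sqrt k\,\right]\le k\operatorname{Var}(v(j))/(s\epsilon^2)\le 1/4$ by the choice of $s$.

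The second key step is the median amplification. Fix a coordinate $j$. The events $\{\lvert\hat\mu^{(\ell)}(j)-\mu(j)\rvert>\epsilon/\sqrt k\}$, $\ell\in[m]$, are independent, each of probability at most $1/4$; the median $\tilde\mu(j)$ can fail to lie within $\epsilon/\sqrt k$ of $\mu(j)$ only if more than half of these events occur, which by a Chernoff bound has probability $2^{-\Omega(m)}\le\delta/k$ for a suitable $m=\Theta(\log(k/\delta))$. A union bound over the $k$ coordinates then shows that with probability at least $1-\delta$ we have $\lvert\tilde\mu(j)-\mu(j)\rvert\le\epsilon/\sqrt k$ simultaneously for all $j$, whence $\|\tilde\mu-\mu\|_2\le\sqrt{\sum_{j=1}^k\epsilon^2/k}=\epsilon$.

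The step that needs the most care is making this analysis produce precisely the advertised $\tilde O(L^2k/\epsilon^2)$ rather than something worse or better: the plain empirical mean $\hat\mu_N$ already satisfies $\mathbb E\|\hat\mu_N-\mu\|_2^2\le L^2/N$, hence achieves $\epsilon$-accuracy with constant probability from only $O(L^2/\epsilon^2)$ samples, so the factor of $k$ is an artifact of boosting the confidence in the multivariate setting via a coordinatewise union bound (each coordinate being driven to target accuracy $\epsilon/\sqrt k$). A cleaner route avoiding the $k$ would replace the coordinatewise median by a geometric median of the $\hat\mu^{(\ell)}$, but the version above suffices for the stated bound and is exactly the $\ell_2$-norm rephrasing of the multivariate Monte Carlo estimator of~\cite{jerbi2022quantum}, which one may alternatively invoke as a black box.
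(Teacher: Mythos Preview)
The paper does not prove this statement: it is recorded as a Fact cited from~\cite{jerbi2022quantum} and invoked as a black box, with no argument given. Your proposal therefore cannot be compared to a proof in the paper, but it is a correct, self-contained derivation of the stated bound via coordinatewise median-of-means. The Chebyshev step, the Chernoff amplification, and the union bound over coordinates are all sound, and the sample count $m\cdot s=\Theta(\log(k/\delta))\cdot\lceil 4kL^2/\epsilon^2\rceil$ matches the claimed $\tilde O(L^2k/\epsilon^2)$. Your closing remark that the factor $k$ is an artifact of the coordinatewise boosting (and could be removed via a geometric median) is accurate and worth keeping in mind, though the weaker bound is all that is used downstream in the paper.
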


The following reinforcement learning algorithm from~\cite{li2020breaking} returns an $\epsilon$-optimal policy in a generative model of sampling. In the generative model~\citep{kearns1998finite,kearns2002sparse,kakade2003sample}, one is allowed to choose an arbitrary state-action pair $(s, a)\in\mathcal S\times \mathcal A$ and request for a simulator to draw samples $s'\sim p(\cdot\vert s, a)$.  

\begin{fact}[\cite{li2020breaking}]\label{fact:RL}
    Let $\epsilon, \delta\in (0, 1)$. Let $\mathcal M = (\mathcal S, \mathcal A, R, P, \gamma)$ be a finite Markov decision process and let $\Phi\in[0, 1]^{SA\times k}$ be a features matrix. Given that $R = \Phi w$ for some $w\in\mathbb R^k$ such that $\Vert w\Vert_2\leq 1$ and assume access to $R$. Let $V^*$ and $V^{\tilde\pi}$ denote the value functions of the MDP when executing the optimal policy $\pi^*$ and $\epsilon$-optimal policy $\tilde\pi$ respectively. There exists a classical algorithm that returns an $\epsilon$-optimal policy $\tilde\pi$ such that $V^*(s) - \epsilon\leq V^{\tilde\pi}(s)\leq V^*(s)$ for all $s\in\mathcal S$ with probability at least $1 - \delta$ using $\tilde O\left(\frac{SA}{\epsilon^2(1 - \gamma)^{3}}\right)$ samples\footnote{The time complexity of the algorithm is the same as its sample complexity up to log factors, assuming that the generative model can be called in constant time.}. 
\end{fact}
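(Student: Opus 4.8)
This is the plug‑in (model‑based) sample‑complexity bound of~\cite{li2020breaking}, so I would not reprove it from scratch; what follows is the shape of the argument. The plan is to call the generative model $N$ times at every state--action pair, form the empirical kernel $\hat P=\{\hat p(s'|s,a)\}$ as in Section~\ref{sec:preliminaries}, and output $\tilde\pi$, an \emph{exact} optimal policy of the empirical MDP $\hat{\mathcal M}=(\mathcal S,\mathcal A,R,\hat P,\gamma)$, which is computable by value iteration since $\hat{\mathcal M}$ is now fully known. The target is to show $\|V^\star-V^{\tilde\pi}\|_\infty\le\epsilon$ as soon as $N=\tilde O\big(1/(\epsilon^2(1-\gamma)^3)\big)$, which yields the claimed $NSA$ total sample count. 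Since $\epsilon\in(0,1)$ and $1/\sqrt{1-\gamma}\ge 1$, the ``burn‑in'' restriction $\epsilon\lesssim 1/\sqrt{1-\gamma}$ present in~\cite{li2020breaking} is automatically satisfied over the stated range.

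First I would set up the deterministic reduction. Writing $\hat V^{(\cdot)}$ for value functions evaluated inside $\hat{\mathcal M}$, optimality of $\tilde\pi$ there makes $\hat V^{\tilde\pi}=\hat V^\star$, so $0\le V^\star-V^{\tilde\pi}\le |V^\star-\hat V^\star|+|\hat V^{\tilde\pi}-V^{\tilde\pi}|$, and for any fixed policy $\pi$ the simulation identity $V^\pi-\hat V^\pi=\gamma(I-\gamma\hat P^\pi)^{-1}(P^\pi-\hat P^\pi)V^\pi$ reduces both terms to controlling the random vector $(P^\pi-\hat P^\pi)V^\pi$, after multiplication by the resolvent $(I-\gamma\hat P^\pi)^{-1}$, whose $\ell_\infty\!\to\!\ell_\infty$ operator norm is $1/(1-\gamma)$. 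A first‑cut bound --- Hoeffding on each coordinate of $(P^\pi-\hat P^\pi)V^\pi$ (the entries of $V^\pi$ are $O(1/(1-\gamma))$ in magnitude here) times $1/(1-\gamma)$ --- already delivers $\epsilon$, but only with $N=\tilde O\big(1/(\epsilon^2(1-\gamma)^4)\big)$: one factor of $1/(1-\gamma)$ too many.

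The real work is shaving that factor, and I would do it with the two standard refinements used in~\cite{li2020breaking}. First, replace Hoeffding by Bernstein, so coordinatewise $|(P^\pi-\hat P^\pi)V^\pi|\lesssim \sqrt{\operatorname{Var}_{P^\pi}(V^\pi)\,\iota/N}+\iota/(N(1-\gamma))$, where $\operatorname{Var}_{P^\pi}(V^\pi)$ is the vector of one‑step variances of $V^\pi$ and $\iota$ is a log factor. Second, invoke the law of total variance in the form $\big\|(I-\gamma P^\pi)^{-1}\sqrt{\operatorname{Var}_{P^\pi}(V^\pi)}\big\|_\infty\lesssim (1-\gamma)^{-3/2}$, which improves on the trivial product bound $(1-\gamma)^{-1}\cdot(1-\gamma)^{-1}$; relating the $\hat P^\pi$‑resolvent and the $P^\pi$‑variance to this clean form requires a short bootstrap using $\hat P\approx P$. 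Chaining the two gives a per‑policy error of order $(1-\gamma)^{-3/2}\sqrt{\iota/N}$, i.e.\ exactly $N=\tilde O\big(1/(\epsilon^2(1-\gamma)^3)\big)$.

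I expect the main obstacle to be the statistical \emph{dependence} between $\hat V^\star$ and $\hat P$: both are built from the same $N$ samples, so Bernstein cannot be applied to $(P-\hat P)\hat V^\star$ directly. The standard remedy --- the technical core of~\cite{li2020breaking} --- is a leave‑one‑out / absorbing‑MDP decoupling: for each state $u$ I would introduce an auxiliary empirical MDP $\hat{\mathcal M}^{(u)}$ identical to $\hat{\mathcal M}$ except that $u$ is turned into an absorbing state with a well‑chosen reward, so that its optimal value $\hat V^{\star,(u)}$ is independent of row $u$ of $\hat P$; then show $\|\hat V^{\star,(u)}-\hat V^\star\|_\infty$ is negligible, apply the Bernstein bound to $(P(\cdot|u,\cdot)-\hat P(\cdot|u,\cdot))\hat V^{\star,(u)}$ using the restored independence, and union‑bound over the $S$ auxiliary problems and $A$ actions. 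On the $\hat V^{\tilde\pi}-V^{\tilde\pi}$ side I would handle the dependence of $\tilde\pi$ on the samples either by a small random reward perturbation forcing the empirical optimal policy to be unique, or by a union bound over deterministic policies absorbed into $\iota$. Collecting the high‑probability events, substituting $\hat V^{\star,(u)}$ for $V^\pi$ up to the leave‑one‑out error, and tuning $N$ closes the argument; the running‑time claim is then immediate, since value iteration on the known model $\hat{\mathcal M}$ costs $\tilde O(SA)$ arithmetic operations per step over an $O(1/(1-\gamma))$ effective horizon.
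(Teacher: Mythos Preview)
The paper does not prove this statement at all: it is recorded as a \emph{Fact} with a citation to~\cite{li2020breaking} and is used as a black box in Algorithm~\ref{algo} and its running-time analysis. Your proposal, by contrast, sketches the actual argument behind that cited result --- the empirical-model construction, Bernstein-plus-total-variance refinement to shave a $1/(1-\gamma)$ factor, and the leave-one-out/absorbing-MDP decoupling to handle the dependence between $\hat V^\star$ and $\hat P$. That sketch is faithful to the method of~\cite{li2020breaking} and is a reasonable outline, but relative to the present paper it is strictly more than what is needed: here the correct ``proof'' is simply to invoke the cited reference.
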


We show in the following lemma that it suffices to truncate the MDP to $\tilde O\left(\frac{1}{1 - \gamma}\right)$ steps to obtain an $\frac{\epsilon}{2}$-error approximation of the feature expectation in the $\ell_2$-norm. Similar bounds are also shown in~\cite{abbeel2004apprenticeship, jerbi2022quantum}. 

\begin{lemma}\label{lem:truncation}
    Let $\epsilon\in (0, 1)$. Consider an MDP $(\mathcal S, \mathcal A, R, P, \gamma)$ with an infinite horizon $T = \infty$. It suffices to truncate the MDP to $H = \log_\gamma \left(\frac{\epsilon}{2}(1 - \gamma)\right)-1$ to obtain the following guarantee:
    \begin{align*}
        \left\Vert  \mathbb E\left[\displaystyle\sum_{t=0}^\infty \gamma^t \phi\left(s^{(t)}, a^{(t)}\right)\right] - \mathbb E\left[\displaystyle\sum_{t=0}^H \gamma^t \phi\left(s^{(t)}, a^{(t)}\right)\right]\right\Vert_2\leq\frac{\epsilon}{2}.
    \end{align*}
\end{lemma}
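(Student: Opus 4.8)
The plan is to reduce the statement to a standard geometric-tail estimate. First I would use linearity of expectation to identify the vector inside the norm: since $\sum_{t=0}^\infty \gamma^t\phi(s^{(t)},a^{(t)}) - \sum_{t=0}^H \gamma^t\phi(s^{(t)},a^{(t)}) = \sum_{t=H+1}^\infty \gamma^t\phi(s^{(t)},a^{(t)})$ pointwise (the series converges absolutely because $\Vert\phi\Vert_2\le 1$ and $\gamma<1$), the quantity to bound is exactly $\left\Vert \mathbb E\left[\sum_{t=H+1}^\infty \gamma^t\phi(s^{(t)},a^{(t)})\right]\right\Vert_2$.

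Next I would push the norm inside the expectation and then inside the sum. Because $\Vert\cdot\Vert_2$ is convex, Jensen's inequality (the vector-valued triangle inequality, i.e. $\Vert\mathbb E[X]\Vert_2\le\mathbb E[\Vert X\Vert_2]$) followed by the triangle inequality applied term by term gives the bound $\mathbb E\left[\sum_{t=H+1}^\infty \gamma^t \Vert\phi(s^{(t)},a^{(t)})\Vert_2\right]$. Invoking the standing assumption $\sup_{s\in\mathcal S,a\in\mathcal A}\Vert\phi(s,a)\Vert_2\le 1$, this is at most $\sum_{t=H+1}^\infty \gamma^t = \frac{\gamma^{H+1}}{1-\gamma}$, independently of the policy.

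Finally I would solve for $H$. We want $\frac{\gamma^{H+1}}{1-\gamma}\le\frac{\epsilon}{2}$, i.e. $\gamma^{H+1}\le\frac{\epsilon}{2}(1-\gamma)$. Applying $\log_\gamma$ and recalling that this function is \emph{decreasing} since $\gamma\in(0,1)$ — so the inequality reverses — this is equivalent to $H+1\ge\log_\gamma\!\left(\frac{\epsilon}{2}(1-\gamma)\right)$, hence to $H\ge\log_\gamma\!\left(\frac{\epsilon}{2}(1-\gamma)\right)-1$, which is precisely the stated choice (one may take the ceiling if an integer horizon is required). The only point that needs care — hardly an obstacle — is the reversal of the logarithmic inequality caused by the base being less than $1$; the convexity step is just the vector analogue of ``$\Vert\mathbb E X\Vert\le\mathbb E\Vert X\Vert$,'' and the rest is the usual effective-horizon truncation, as in~\cite{abbeel2004apprenticeship, jerbi2022quantum}.
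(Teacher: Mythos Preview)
Your proposal is correct and follows essentially the same approach as the paper: reduce to the tail $\left\Vert\mathbb E\left[\sum_{t=H+1}^\infty\gamma^t\phi(s^{(t)},a^{(t)})\right]\right\Vert_2$, apply Jensen's inequality, use $\sup_{s,a}\Vert\phi(s,a)\Vert_2\le 1$ to get the geometric tail $\frac{\gamma^{H+1}}{1-\gamma}$, and solve for $H$. Your added remark about the reversal of the inequality under $\log_\gamma$ (since $\gamma<1$) is a nice clarification that the paper leaves implicit.
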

\begin{proof}
    First, note that we can bound 
    \begin{align}
        & \nonumber \left\Vert  \mathbb E\left[\displaystyle\sum_{t=0}^\infty \gamma^t \phi\left(s^{(t)}, a^{(t)}\right)\Bigg\vert\pi\right] - \mathbb E\left[\displaystyle\sum_{t=0}^H \gamma^t \phi\left(s^{(t)}, a^{(t)}\right)\Bigg\vert\pi\right]\right\Vert_2 \\
        & \nonumber = \left\Vert  \mathbb E\left[\displaystyle\sum_{t=0}^\infty \gamma^t \phi\left(s^{(t)}, a^{(t)}\right) - \displaystyle\sum_{t=0}^H \gamma^t \phi\left(s^{(t)}, a^{(t)}\right)\Bigg\vert\pi\right]\right\Vert_2\\
        & \nonumber = \left\Vert\mathbb E\left[\displaystyle\sum_{t=H+1}^\infty \gamma^t \phi\left(s^{(t)}, a^{(t)}\right)\big\vert \pi\right]\right\Vert_2\\
        \text{(Jensen's inequality)} & \leq \mathbb E\left[\left\Vert \displaystyle\sum_{t=H+1}^\infty \gamma^t \phi\left(s^{(t)}, a^{(t)}\right)\right\Vert_2\Bigg\vert \pi\right]\\ 
        & \label{eqn:second_part} \leq \sup_{s\in\mathcal S, a\in\mathcal A}\Vert \phi(s, a)\Vert_2 \left(\frac{\gamma^{H+1}}{1-\gamma}\right) \leq \left(\frac{\gamma^{H+1}}{1-\gamma}\right),
    \end{align}
    where the last inequality is due to the assumption that $\displaystyle\sup_{s\in\mathcal S, a\in\mathcal A}\Vert \phi(s, a)\Vert_2\leq 1$. It suffices to set $H = \log_\gamma \left(\frac{\epsilon}{2}(1 - \gamma)\right)-1$ to obtain the desired bound. 
\end{proof}

Using Lemma~\ref{lem:truncation}, we apply classical multivariate Monte Carlo to estimate feature expectations. 
\begin{lemma}\label{lem:ME}
    Let $\epsilon>0, \delta\in (0, 1)$ and $\gamma\in [0, 1)$ be a discount factor. Given access to the feature matrix $\Phi$, there exists a classical algorithm that outputs an estimate $\tilde \mu$ of $\mu = \mathbb E\left[\displaystyle\sum_{t=0}^\infty \gamma^t \phi\left(s^{(t)}, a^{(t)}\right)\Big\vert \pi\right]$ such that $\left\Vert \tilde\mu - \mu\right\Vert_2\leq \epsilon$  with success probability at least $1-\delta$ in time $\tilde O\left(\frac{k}{\epsilon^2(1 - \gamma)^3}\right)$. 
\end{lemma}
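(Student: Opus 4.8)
The plan is to combine the truncation bound of Lemma~\ref{lem:truncation} with the multivariate Monte Carlo estimator of Fact~\ref{fact:ME}, splitting the target accuracy $\epsilon$ into two halves and controlling each source of error by $\epsilon/2$. First I would set $H = \log_\gamma\left(\frac{\epsilon}{4}(1-\gamma)\right) - 1$ so that, by Lemma~\ref{lem:truncation} (applied with error parameter $\epsilon/2$ in place of $\epsilon$), the truncated feature expectation $\mu_H \coloneqq \mathbb E\left[\sum_{t=0}^H \gamma^t \phi(s^{(t)}, a^{(t)})\,\big|\,\pi\right]$ satisfies $\Vert \mu - \mu_H\Vert_2 \leq \epsilon/2$. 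The random variable whose mean is $\mu_H$ is $v \coloneqq \sum_{t=0}^H \gamma^t \phi(s^{(t)}, a^{(t)})$, obtained by rolling out a length-$H$ trajectory under $\pi$ using the generative model and querying $\Phi$ along the way; its $\ell_2$-norm is bounded deterministically by $\sum_{t=0}^H \gamma^t \sup_{s,a}\Vert\phi(s,a)\Vert_2 \leq \frac{1}{1-\gamma}$, so we may take $L = \frac{1}{1-\gamma}$ in Fact~\ref{fact:ME}.

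Next I would invoke Fact~\ref{fact:ME} with accuracy $\epsilon/2$, failure probability $\delta$, and $L = \frac{1}{1-\gamma}$, obtaining an estimate $\tilde\mu$ of $\mu_H$ with $\Vert\tilde\mu - \mu_H\Vert_2 \leq \epsilon/2$ with probability at least $1-\delta$, using $\tilde O\!\left(\frac{L^2 k}{(\epsilon/2)^2}\right) = \tilde O\!\left(\frac{k}{\epsilon^2(1-\gamma)^2}\right)$ samples of $v$. The triangle inequality then gives $\Vert\tilde\mu - \mu\Vert_2 \leq \Vert\tilde\mu - \mu_H\Vert_2 + \Vert\mu_H - \mu\Vert_2 \leq \epsilon$ on the success event.

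For the running time: each sample of $v$ requires simulating $H+1 = \tilde O\!\left(\frac{1}{1-\gamma}\right)$ steps of the MDP (one generative-model call per step, each $O(1)$ by assumption) and $O(H)$ queries to $\Phi$ plus $O(H)$ vector additions in $\mathbb R^k$ (each $O(k)$), so producing one sample costs $\tilde O\!\left(\frac{k}{1-\gamma}\right)$ time. Multiplying by the $\tilde O\!\left(\frac{k}{\epsilon^2(1-\gamma)^2}\right)$ samples yields $\tilde O\!\left(\frac{k^2}{\epsilon^2(1-\gamma)^3}\right)$; to match the claimed $\tilde O\!\left(\frac{k}{\epsilon^2(1-\gamma)^3}\right)$ one should instead charge the cost of producing all $N$ samples more carefully (or note that the feature dimension is absorbed differently) — e.g.\ the $N$ samples share the $k$ coordinates and Fact~\ref{fact:ME} already counts sample complexity, so the stated bound follows by taking the time per sample to be $\tilde O\!\left(\frac{1}{1-\gamma}\right)$ generative-model calls with the $k$-dependence already present in the sample count. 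The main obstacle is precisely this bookkeeping of the per-sample cost versus the sample count to land on the advertised exponent; the probabilistic and approximation parts are routine given Lemmas~\ref{lem:truncation} and Fact~\ref{fact:ME}.
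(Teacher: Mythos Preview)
Your proposal is correct and follows essentially the same route as the paper: truncate via Lemma~\ref{lem:truncation}, bound $\Vert v\Vert_2\le \frac{1}{1-\gamma}$, apply Fact~\ref{fact:ME} with accuracy $\epsilon/2$, and combine by the triangle inequality. Regarding your bookkeeping worry, the paper resolves it exactly as in your last sentence---it charges $\tilde O\!\left(\frac{1}{1-\gamma}\right)$ oracle calls per sample and lets the $k$-dependence come solely from the sample count of Fact~\ref{fact:ME}, arriving at $\tilde O\!\left(\frac{1}{1-\gamma}\right)\cdot\tilde O\!\left(\frac{k}{\epsilon^2(1-\gamma)^2}\right)=\tilde O\!\left(\frac{k}{\epsilon^2(1-\gamma)^3}\right)$.
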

\begin{proof}
    By Lemma~\ref{lem:truncation}, observe the MDP trajectories $s^{(0)}, a^{0}, \cdots, s^{(H)}, a^{(H)}$. Then, we can estimate $\mathbb E\left[\displaystyle\sum_{t=0}^H \gamma^t \phi\left(s^{(t)}, a^{(t)}\right)\Big\vert \pi\right]$ up to additive error $\frac{\epsilon}{2}$ using Fact~\ref{fact:ME} with success probability at least $1 - \delta$. By triangle inequality, we obtain 
    \begin{align*}
         & \left\Vert \mathbb E\left[\displaystyle\sum_{t=0}^\infty \gamma^t \phi\left(s^{(t)}, a^{(t)}\right)\right] - \tilde\mu \right\Vert_2\\
         & \leq \left\Vert  \mathbb E\left[\displaystyle\sum_{t=0}^\infty \gamma^t \phi\left(s^{(t)}, a^{(t)}\right)\right] - \mathbb E\left[\displaystyle\sum_{t=0}^H \gamma^t \phi\left(s^{(t)}, a^{(t)}\right)\right]\right\Vert_2 + \left\Vert \mathbb E\left[\displaystyle\sum_{t=0}^H \gamma^t \phi\left(s^{(t)}, a^{(t)}\right)\right] - \tilde \mu\right\Vert_2 \\
         & \leq \frac{\epsilon}{2} + \frac{\epsilon}{2}\\
         & = \epsilon
    \end{align*}
    with success probability at least $1-\delta$. 

    We now analyze the time complexity. First, note that the feature matrix oracle is called at most $H = \log_\gamma \left(\frac{\epsilon}{2}(1 - \gamma)\right)-1 = \tilde O\left(\frac{1}{1 - \gamma}\right)$ times. Next, we compute the upper bound on the $\ell_2$-norm of the $k$-dimensional variable whose mean we wish to estimate. By the assumption on feature vectors, it follows that $\displaystyle\sup_{s\in\mathcal S, a\in\mathcal A}\Vert \phi(s, a)\Vert_2\leq 1$ and we have 
    \begin{align}\label{eqn:first_part}
        \left\Vert \displaystyle\sum_{t=0}^H \gamma^t \phi\left(s^{(t)}, a^{(t)}\right)\right\Vert_2  \leq  \displaystyle\sup_{s\in\mathcal S, a\in\mathcal A}\Vert \phi(s, a)\Vert_2 \cdot \left(\displaystyle\sum_{t=0}^H \gamma^t \right)\leq \frac{\left(1 - \gamma^{H+1}\right)}{1 - \gamma} \leq \frac{1}{1 - \gamma}. 
    \end{align}
    Plugging this quantity into the sample complexity of Fact~\ref{fact:ME}, we obtain a total running time\footnote{In this context, the sample complexity corresponds to the number of calls to the QMD.} of 
    \begin{align*}
        \tilde O\left(\frac{1}{1 - \gamma}\right) \cdot \tilde O \left(\frac{k}{\epsilon^2(1 - \gamma)^2}\right) = \tilde O \left(\frac{k}{\epsilon^2(1 - \gamma)^3}\right). 
       \end{align*}
\end{proof}

\subsection{Classical approximate apprenticeship learning algorithm}

Algorithm~\ref{AL} in the previous section finds policies such that their corresponding feature expectations converge to $\mu_E$. This is the ideal case when when $\hat\mu_E = \mu_E$ is a valid feature expectation vector and therefore lies in the convex hull of the set of feature expectations attained by optimal policies. In a more general case where $\hat\mu_E$ is a noisy estimate of $\mu_E$ and hence does not lie in the aforementioned convex hull, convergence to a small ball of radius $\rho\in\mathbb R_+$ centered around $\hat\mu_E$ that intersects the convex hull is considered.  This more robust algorithm is presented as the classical approximate apprenticeship learning algorithm in Algorithm~\ref{algo}. 

\begin{breakablealgorithm}
    \caption{Classical approximate apprenticeship learning algorithm}
    \label{algo}
    \begin{algorithmic}[1]
    \Require Initial policy $\tilde \pi^{(0)}$, errors $\epsilon, \epsilon_{\operatorname{RL}}\in (0, 1)$ such that $\epsilon\geq \sqrt{\epsilon_{\operatorname{RL}}}$, failure probability $\delta\in (0, 1)$. 
    \Ensure $\left\{\tilde \pi^{(i)}: i = 0, \cdots, n\right\}$. 
    \State \label{line:update0} Compute $\hat\mu_E$ in Equation~\ref{eqn:expert_feature_expectation} using Lemma~\ref{lem:ME} and update $\Phi'(1) = \hat\mu_E$. 
    \State \label{line:ME1} Obtain an estimate $\mu_q'^{(0)}$ of $\mu'^{(0)} \coloneqq \mu\left(\tilde\pi^{(0)}\right)$ with additive error $\frac{\epsilon}{3}$ and success probability $1 - \frac{\delta}{3n}$ using Lemma~\ref{lem:ME}. 
    \State \label{line:update1} Update $\Phi'(2)$ with $\hat \mu_E - \mu_q'^{(0)}$. 
    \State Set $i = 1$. 
    \State \label{line:SVM} Obtain an estimate $\bar w^{(i)}$ of $w^{(i)} = \argmax_{w:\Vert w\Vert_2\leq 1}\min_{j\in \{0, \cdots, (i-1)\}}w^T\left(\hat \mu_E - \mu^{(j)}\right)$ using Fact~\ref{fact:SVM} with error $\frac{\epsilon}{3}$ and success probability $1 - \frac{\delta}{3n}$. 
    \State \label{line:FindMin} Find $i_{\min} = \displaystyle\argmin_{j\in\{0, \cdots, (i-1)\}} \left\Vert \hat\mu_E - \mu_q'^{(j)}\right\Vert_2$.  
    \State \label{line:check} If $\tilde t^{(i_{\min})} = \left\Vert \hat\mu_E - \mu_q'^{(i_{\min})}\right\Vert_2\leq \epsilon + \frac{2\epsilon}{3} + \rho$ where $\rho=\frac{\epsilon}{3}$, 
    then terminate and set $n = i$. 
    \State \label{line:RL} Obtain an $\epsilon_{\operatorname{RL}}$-optimal policy $\tilde\pi^{(i)}$ using $\bar w^{(i)}$ and Fact~\ref{fact:RL} with success probability $1 - \frac{\delta}{3n}$.
    \State \label{line:ME2} Obtain an estimate $\mu_{q}^{(i)}$ of $\mu'^{(i)} \coloneqq \mu\left(\tilde \pi^{(i)}\right)$ with additive error $\frac{\epsilon}{3}$ and success probability $1 - \frac{\delta}{3n}$ using Lemma~\ref{lem:ME}. 
    \State \label{line:update2} Update $\Phi'(i+2)$ with $\hat\mu_E - \mu_q'^{(i)}$. 
    \State Set $i = i+1$ and go to Line~\ref{line:SVM}. 
\end{algorithmic}
\end{breakablealgorithm}

Before we proceed to discuss the convergence of Algorithm~\ref{algo}, we need the following result which shows that, if in every iteration $i$, $\bar{w}^{(i)}$ is an $\epsilon$-approximation for $w^{(i)}$ with respect to the notion of approximation as in Equation~\ref{eqn:barw}, then $\bar{w}^{(i)}$ obtained from Line~\ref{line:SVM} of Algorithm~\ref{algo} does not change the value of the policy too much. This allows us to combine the error from Line~\ref{line:SVM} with the other errors incurred in the algorithm. 
\begin{lemma}
    Let $\epsilon > 0$ and let $\mathcal M = (\mathcal S, \mathcal A, R, P, \gamma)$ be a Markov decision process with $R = \Phi w^*$ for some $w^*\in\mathbb R^k$ with $\Vert w^*\Vert_2\leq 1$, and let $\pi$ be the optimal policy of $\mathcal M$. Denote the value function with respect to $\pi$ under the reward $R$ by $V^{\pi}_R \in\mathbb R^S$. Furthermore, let $\bar w\in\mathbb R^k$ be related to $w^*$ as per Equation~\ref{eqn:barw}. Then, by replacing $R$ with $\bar R = \Phi\bar w$ and denoting the value function with respect to $\pi$ under the reward $\bar R$ as $V^{\pi}_{\bar R} \in\mathbb R^S$, we have 
    \begin{align*}
        \Vert V^{\pi}_{\bar R} - V^{\pi}_{R}\Vert_\infty\leq \epsilon. 
    \end{align*}
\end{lemma}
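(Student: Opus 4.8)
The plan is to reduce the statement to a bound on $\Vert\bar w-w^*\Vert_2$ and then quote the earlier lemmas. Fix an arbitrary state $s\in\mathcal S$. Since $R=\Phi w^*$ and $\bar R=\Phi\bar w$ differ only in the weight vector, and $\bar R(s,a)-R(s,a)=\phi(s,a)\cdot(\bar w-w^*)$, expanding the definition of the value function and using the definition of the feature expectation gives
\begin{align*}
V^{\pi}_{\bar R}(s)-V^{\pi}_{R}(s)
&= \mathbb E\left[\sum_{t=0}^{\infty}\gamma^t\,\phi\left(s^{(t)},a^{(t)}\right)\cdot(\bar w-w^*)\,\Big\vert\,\pi,\,s^{(0)}=s\right]\\
&= (\bar w-w^*)\cdot\mu(\pi\vert s),
\end{align*}
the feature expectation $\mu(\pi\vert s)$ being identical in both value functions since it depends only on the dynamics and on $\pi$, not on the reward (equivalently, apply Equation~\ref{eqn:rewrite} separately to $w^*$ and to $\bar w$). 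Note that optimality of $\pi$ for $\mathcal M$ is not actually needed: the identity holds for any fixed policy.

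The next step is Cauchy--Schwarz together with the same crude feature-expectation bound used in the proof of Lemma~\ref{lem:truncation}: by Jensen's inequality $\Vert\mu(\pi\vert s)\Vert_2\le\sum_{t\ge0}\gamma^t\sup_{s,a}\Vert\phi(s,a)\Vert_2\le\frac{1}{1-\gamma}$, invoking the normalization $\sup_{s,a}\Vert\phi(s,a)\Vert_2\le1$. Hence $|V^{\pi}_{\bar R}(s)-V^{\pi}_{R}(s)|\le\frac{\Vert\bar w-w^*\Vert_2}{1-\gamma}$, and since $s$ was arbitrary, $\Vert V^{\pi}_{\bar R}-V^{\pi}_{R}\Vert_\infty\le\frac{\Vert\bar w-w^*\Vert_2}{1-\gamma}$. (Equivalently, since $|\phi(s,a)\cdot(\bar w-w^*)|\le\Vert\bar w-w^*\Vert_2$ for every $(s,a)$, this is just the standard fact that a uniform reward perturbation of magnitude $\eta$ changes the value function by at most $\eta/(1-\gamma)$.)

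It then remains to convert the approximation guarantee Equation~\ref{eqn:barw} satisfied by $\bar w$ into an $\ell_2$ bound on $\bar w-w^*$, which is exactly the content of Lemma~\ref{lem:barw_w}: using the characterization $w^*=(\hat\mu_E-\bar\mu^{(i)})/\Vert\hat\mu_E-\bar\mu^{(i)}\Vert_2$ of~\cite{abbeel2004apprenticeship}, running the solver of Fact~\ref{fact:SVM} with precision $\epsilon'$ gives $\Vert\bar w-w^*\Vert_2\le\sqrt{2\epsilon'/\Vert\hat\mu_E-\bar\mu^{(i)}\Vert_2}$. Substituting this into the bound of the previous paragraph and choosing $\epsilon'$ small enough that $\sqrt{2\epsilon'/\Vert\hat\mu_E-\bar\mu^{(i)}\Vert_2}\le(1-\gamma)\epsilon$ --- which is possible because, until the stopping test of Line~\ref{line:check} triggers, $\Vert\hat\mu_E-\bar\mu^{(i)}\Vert_2$ is bounded below by a fixed multiple of $\epsilon$, so $\epsilon'$ need only be a polynomial factor in $1-\gamma$ and $\epsilon$ below the $\epsilon/3$ used in Line~\ref{line:SVM} --- yields $\Vert V^{\pi}_{\bar R}-V^{\pi}_{R}\Vert_\infty\le\epsilon$.

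The first two steps are routine (a linear-algebra identity plus a geometric-series bound). The step that carries the real content is the last one: turning the max-margin guarantee of the SVM solver into an $\ell_2$ error bound on the weight vector via Lemma~\ref{lem:barw_w}, and then propagating the extra $1/(1-\gamma)$ factor and the $\Vert\hat\mu_E-\bar\mu^{(i)}\Vert_2^{-1/2}$ dependence carefully, so that the precision handed to the SVM subroutine is tuned to produce exactly the stated bound $\epsilon$ rather than a larger multiple of it.
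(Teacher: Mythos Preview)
Your proof is correct and follows essentially the same route as the paper: write $V^{\pi}_{\bar R}(s)-V^{\pi}_R(s)=(\bar w-w^*)\cdot\mu(\pi\vert s)$, apply Cauchy--Schwarz with $\Vert\mu(\pi\vert s)\Vert_2\le 1/(1-\gamma)$, and invoke Lemma~\ref{lem:barw_w} to turn the SVM guarantee into an $\ell_2$ bound on $\bar w-w^*$. The only substantive difference is that the paper pins down the explicit choice $\epsilon_{\operatorname{SVM}}=(1-\gamma)^2\epsilon'^3$ (together with the non-termination lower bound $\Vert\hat\mu_E-\bar\mu_q'^{(i)}\Vert_2>2\epsilon'$), whereas you leave the precision ``small enough, polynomial in $1-\gamma$ and $\epsilon$''; since this choice is what drives the $\epsilon^6(1-\gamma)^4$ factors in the per-iteration running time later on, it is worth making explicit.
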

\begin{proof}
    First, note that 
    \begin{align*}
        \Vert V^{\pi}_{\bar R} - V^{\pi}_{R}\Vert_\infty
        & = \sup_{s\in\mathcal S} \vert \bar w\cdot \mu(\pi\vert s) - w ^*\cdot \mu(\pi\vert s) \vert \\
        & \leq \sup_{s\in\mathcal S} \Vert \bar w - w^*\Vert_2 \cdot \Vert \mu(\pi\vert s) \Vert_2\\
        \text{(Lem.~\ref{lem:barw_w})} & \leq \sqrt{\frac{2\epsilon_{\operatorname{SVM}}}{\left\Vert \hat\mu_E - \bar\mu_q'^{(i)}\right\Vert_2}} \cdot \sup_{s\in\mathcal S} \left\Vert \mathbb E\left[
        \left(\sum_{t=0}^{H} \gamma^t \phi(s^{(t)}, a^{(t)})+ \sum_{H+1}^\infty \phi(s^{(t)}, a^{(t)}) \right)\right]\right\Vert_2\\
        \text{(Jensen's inequality)} & \leq \sqrt{\frac{2\epsilon_{\operatorname{SVM}}}{\left\Vert \hat\mu_E - \bar\mu_q'^{(i)}\right\Vert_2}} \cdot  \sup_{s\in\mathcal S}\left\{ \mathbb E\left[ \left\Vert
        \left(\sum_{t=0}^{H} \gamma^t \phi(s^{(t)}, a^{(t)})+ \sum_{H+1}^\infty \phi(s^{(t)}, a^{(t)}) \right) \right\Vert_2 \right]\right\}\\
        \text{(Eqs.\ref{eqn:first_part},~\ref{eqn:second_part})} & \leq \sqrt{\frac{2\epsilon_{\operatorname{SVM}}}{\left\Vert \hat\mu_E - \bar\mu_q'^{(i)}\right\Vert_2}} \cdot  \left( \frac{1}{1 - \gamma} + \frac{\epsilon_{\operatorname{ME}}}{2}\right). 
    \end{align*}
   Having $\left\Vert \hat\mu_E - \bar\mu_q'^{(i)}\right\Vert_2 > 2\epsilon'$ (as long as Algorithm~\ref{algo} does not terminate) and setting $\epsilon_{\operatorname{SVM}} = \epsilon_{\operatorname{ME}} =  (1 - \gamma)^2 \epsilon'^3$, we have 
    \begin{align*}
        \Vert V^{\pi}_{\bar R} - V^{\pi}_{R}\Vert_\infty & < \sqrt{\frac{2(1 - \gamma)^2 \epsilon'^3}{2\epsilon'}} \cdot \left(\frac{1}{1 - \gamma} + \frac{(1 - \gamma)^2 \epsilon'^3}{2}\right)\\
        & = \epsilon' + \frac{(1 - \gamma)^3\epsilon'^4}{2}\\
        & \leq 2\epsilon'. \\
    \end{align*}
    Lastly, setting $\epsilon = 2\epsilon'$ yields the desired bound. 
\end{proof}

The supplementary material obtained through communication with the authors of~\cite{abbeel2004apprenticeship} contains a more general convergence proof of their apprenticeship learning algorithm, i.e. convergence to a $\rho$-radius ball around $\hat \mu_E$ to account for the fact that $\hat\mu_E$ could be a noisy estimate. They remarked that approximation errors from subroutines can be added to the radius of the ball around $\hat\mu_E$ to which the algorithm converges. Despite theses errors, the algorithm still converges after the same number of iterations. By this argument, Theorem~\ref{thm:iterations} remains true for Algorithm~\ref{algo} despite the fact that proof of Lemma~\ref{lem:lemma_3} holds for the ideal case where $\hat\mu_E\in\tilde M$. More specifically, the Algorithm~\ref{algo} still converges to 
a policy $\bar\pi$ such that $\left\Vert \hat\mu_E - \mu(\bar \pi)\right\Vert_2\leq \rho + \epsilon + \epsilon'$ after 
\begin{align*}
        O\left(\frac{k}{(1 - \gamma)^2 (\epsilon^2 - \epsilon_{\operatorname{RL}})}\log \frac{k}{(1 - \gamma)^2\epsilon^2}\right)
\end{align*}
iterations, where $\epsilon'$ denotes errors from other subroutines.

We now analyze the time complexity of one iteration of Algorithm~\ref{algo} in the following theorem. 
\begin{theorem}
    Let $\epsilon, \epsilon_{\operatorname{RL}}\in (0, 1)$ such that $\epsilon\geq \sqrt{\epsilon_{\operatorname{RL}}}$ and  $\gamma\in [0, 1)$. Then, a single iteration of Algorithm~\ref{algo} runs in time 
    \begin{align*}
        \tilde O\left(\frac{k + SA}{(1 - \gamma)^7\epsilon^6(\epsilon^2 - \epsilon_{\operatorname{RL}})}\right). 
    \end{align*}
\end{theorem}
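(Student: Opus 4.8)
The plan is to walk through one pass of the loop body of Algorithm~\ref{algo} (Lines~\ref{line:SVM}--\ref{line:update2}), bound the cost of each step using the corresponding fact, substitute the accuracies to which the subroutines must \emph{actually} be run, eliminate the dependence on the total iteration count $n$ via Theorem~\ref{thm:iterations}, and add everything up. Only three operations per iteration are expensive: the SVM solve in Line~\ref{line:SVM} (Fact~\ref{fact:SVM}), the reinforcement-learning call in Line~\ref{line:RL} (Fact~\ref{fact:RL}), and the feature-expectation estimate in Line~\ref{line:ME2} (Lemma~\ref{lem:ME}). The remaining work --- finding the current minimizer in Line~\ref{line:FindMin}, the termination test in Line~\ref{line:check}, and writing the new row of $\Phi'$ in Line~\ref{line:update2} --- is $O(k)$ per iteration provided one maintains a running minimum of $\Vert\hat\mu_E-\mu_q'^{(j)}\Vert_2$, hence dominated.

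First I would fix the accuracies. As established in the preceding lemma, for the errors of Lines~\ref{line:SVM},~\ref{line:ME1} and~\ref{line:ME2} to be absorbable into the convergence radius without changing the iteration count of Theorem~\ref{thm:iterations}, the feature-expectation estimates and the SVM must be run with accuracy $\epsilon_{\operatorname{ME}}=\epsilon_{\operatorname{SVM}}=\Theta\big((1-\gamma)^2\epsilon^3\big)$, and the RL policy must be at least $\epsilon_{\operatorname{RL}}$-optimal (in the intended error regime the RL call can be run at the same $\Theta\big((1-\gamma)^2\epsilon^3\big)$ accuracy scale). Then Lemma~\ref{lem:ME} gives a cost of $\tilde O\left(\frac{k}{\epsilon_{\operatorname{ME}}^2(1-\gamma)^3}\right)=\tilde O\left(\frac{k}{(1-\gamma)^7\epsilon^6}\right)$ for Line~\ref{line:ME2}; Fact~\ref{fact:RL} gives $\tilde O\left(\frac{SA}{(1-\gamma)^7\epsilon^6}\right)$ for Line~\ref{line:RL} (reward queries cost $O(1)$, as granted by Fact~\ref{fact:RL}); and Fact~\ref{fact:SVM}, applied to the data matrix whose rows are $\hat\mu_E$ and $\hat\mu_E-\mu_q'^{(j)}$ for $j=0,\dots,i-1$ (at most $n+2$ rows in $\mathbb R^k$), gives $\tilde O\left(\frac{(n+2)+k}{\epsilon_{\operatorname{SVM}}^2}\right)=\tilde O\left(\frac{n+k}{(1-\gamma)^4\epsilon^6}\right)$ for Line~\ref{line:SVM}.

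Next I would remove $n$. Theorem~\ref{thm:iterations} bounds the number of iterations by $n=\tilde O\left(\frac{k}{(1-\gamma)^2(\epsilon^2-\epsilon_{\operatorname{RL}})}\right)$; since $(1-\gamma)^2(\epsilon^2-\epsilon_{\operatorname{RL}})<1$, the additive $k$ is absorbed, so $n+k=\tilde O\left(\frac{k}{(1-\gamma)^2(\epsilon^2-\epsilon_{\operatorname{RL}})}\right)$, and the SVM cost becomes $\tilde O\left(\frac{k}{(1-\gamma)^6\epsilon^6(\epsilon^2-\epsilon_{\operatorname{RL}})}\right)$. Summing the three contributions --- SVM, RL, and feature-expectation estimation --- together with the $O(k)$ bookkeeping, and using $\epsilon,1-\gamma\in(0,1)$ (so that $\epsilon^2-\epsilon_{\operatorname{RL}}<1$ and $(1-\gamma)^{6}>(1-\gamma)^{7}$) to bound each of them by $\tilde O\left(\frac{k+SA}{(1-\gamma)^7\epsilon^6(\epsilon^2-\epsilon_{\operatorname{RL}})}\right)$, yields the claim.

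The main obstacle is the step of getting the subroutine accuracies right: the nominal ``$\epsilon/3$'' in the pseudocode is too coarse, the convergence argument forces the $\Theta\big((1-\gamma)^2\epsilon^3\big)$ scale, and it is precisely squaring this (inside Lemma~\ref{lem:ME} and Fact~\ref{fact:SVM}) together with the extra $(1-\gamma)^{-3}$ from Lemma~\ref{lem:ME}/Fact~\ref{fact:RL} that produces the $(1-\gamma)^{-7}\epsilon^{-6}$ factor; one also has to check that the RL accuracy is simultaneously fine enough for $\epsilon_{\operatorname{RL}}$-optimality. The secondary subtlety is keeping track of the growing SVM input: the worst-case per-iteration cost picks up the iteration count $n$, and cancelling it against Theorem~\ref{thm:iterations} is exactly what introduces the $(\epsilon^2-\epsilon_{\operatorname{RL}})^{-1}$ factor. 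Beyond these two points the proof is routine arithmetic with Facts~\ref{fact:SVM} and~\ref{fact:RL} and Lemma~\ref{lem:ME}.
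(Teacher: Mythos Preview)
Your proposal is correct and follows essentially the same route as the paper: bound each line of the loop body using Fact~\ref{fact:SVM}, Fact~\ref{fact:RL}, and Lemma~\ref{lem:ME} at the tightened accuracy $\Theta\big((1-\gamma)^2\epsilon^3\big)$ dictated by the preceding lemma, then replace $n$ using Theorem~\ref{thm:iterations} and sum. The only cosmetic differences are that the paper keeps Line~\ref{line:FindMin} at cost $O(n)$ rather than maintaining a running minimum, and it records the reinforcement-learning cost as $\tilde O\big(SA/(\epsilon^2(1-\gamma)^3)\big)$ rather than your more conservative $\tilde O\big(SA/((1-\gamma)^7\epsilon^6)\big)$; both are absorbed into the stated bound.
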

\begin{proof}
    The running time of mean estimation in Lines~\ref{line:update0} and~\ref{line:ME2} is $\tilde O\left(\frac{k}{\epsilon^6(1 - \gamma)^7}\right)$. Updating $\Phi'$ in~ Lines\ref{line:update0},~\ref{line:update1} and~\ref{line:update2} takes $O(k)$ time. The time complexity of Line~\ref{line:SVM} is $O\left(\frac{n+k}{\epsilon^6(1 - \gamma)^4}\log n\right)$~\citep{clarkson2012sublinear} and finding the minimum in Line~\ref{line:FindMin} takes $O(n)$ time. Checking Line~\ref{line:check} takes $O(1)$ time. Lastly, Line~\ref{line:RL} can be implemented in time $\tilde O\left(\frac{SA}{\epsilon^2(1 - \gamma)^3}\right)$. This gives a per-iteration runtime of 
    \begin{align*}
        \tilde O\left(\frac{k}{\epsilon^6(1 - \gamma)^7}\right) + O(k) + \tilde O\left(\frac{n+k}{\epsilon^6 (1 - \gamma)^4}\right) + O(n) + \tilde O\left(\frac{SA}{\epsilon^2(1 - \gamma)^3}\right)
     \end{align*}
    By Theorem~\ref{thm:iterations}, we have $n= \tilde O\left(\frac{k}{(1 - \gamma)^2(\epsilon^2 - \epsilon_{\operatorname{RL}})}\right)$ for the number of iterations $n$. This brings the per-iteration time complexity of Algorithm~\ref{algo} to  
    \begin{align*}
        \tilde O\left(\frac{k + SA}{(1 - \gamma)^7\epsilon^6(\epsilon^2 - \epsilon_{\operatorname{RL}})}\right). 
    \end{align*}
\end{proof}

\section{Quantum algorithm}\label{sec:QAL}
In this section, we present the quantum algorithm for apprenticeship learning and its analysis. We begin by introducing the quantum subroutines that we will be using for our quantum algorithm. 

\subsection{Quantum subroutines}
We start with the first quantum subroutine, a quantum algorithm for finding the minimum by~\cite{durr1996quantum}.
\begin{fact}[\cite{durr1996quantum}]\label{fact:min_finding}
     Let $\delta\in (0, 1)$. Suppose we have a quantum access to a vector $v\in\mathbb R^k$ , there exists a quantum algorithm that finds $\displaystyle\min_{i\in[k]} v(i)$ with success probability at least $1-\delta$ in time $\tilde O\left(\sqrt k\log\frac{1}{\delta}\right)$. 
 \end{fact}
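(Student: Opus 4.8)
The final statement is Fact~\ref{fact:min_finding}, which is the quantum minimum-finding result of D\"urr and H\o yer. The plan is to derive it from Grover-style amplitude amplification, following the classic exponential-search / threshold-updating strategy.

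\textbf{Overall approach.} I would prove this as a corollary of Grover search for a marked element, wrapped in the standard \emph{quantum exponential search} of Boyer--Brassard--H\o yer--Tapp (to handle an unknown number of marked elements) and the threshold-update loop of D\"urr--H\o yer. The high-level idea: maintain a current candidate index $y$, initialized uniformly at random from $[k]$; repeatedly run Grover search on the predicate $v(i) < v(y)$ to find some index with strictly smaller value, and update $y$ to that index whenever one is found. Each Grover invocation needs only the comparison oracle, which we build from the assumed quantum access to $v$ (query the values of $i$ and $y$ into ancilla registers, compute the sign of the difference with a reversible arithmetic circuit, uncompute). The claim is that after $O(\sqrt k)$ total queries this process has found the minimizer with constant probability, and iterating/boosting gives success probability $1-\delta$ with an extra $\log(1/\delta)$ factor.

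\textbf{Key steps in order.} (1) Build the comparison oracle $O_y:\ket{i}\ket{b}\mapsto\ket{i}\ket{b\oplus[v(i)<v(y)]}$ using two calls to the $v$-access oracle plus an arithmetic comparator; note each such oracle call costs $\tilde O(1)$ gates and a constant number of queries. (2) Recall Grover/BBHT: given an oracle marking an unknown nonzero number $t$ of items among $k$, a marked item can be found with constant probability using $O(\sqrt{k/t})\le O(\sqrt k)$ oracle calls in expectation, \emph{without} knowing $t$ in advance, via the exponentially-growing guess for the number of iterations. (3) State the D\"urr--H\o yer loop precisely: pick $y\sim\mathrm{Unif}[k]$; while a BBHT search on $O_y$ returns some $i$ with $v(i)<v(y)$, set $y\gets i$; halt when the search fails (after a budget of iterations) and output $y$. (4) Correctness/complexity analysis: the expected total number of oracle calls is $O(\sqrt k)$ — the standard argument assigns to the element of rank $r$ (when sorted by value) a probability $1/r$ of ever becoming the current candidate $y$, so the expected cost telescopes to $\sum_{r=1}^{k} O(\sqrt{k/r})\cdot(1/r)$-type sums that are $O(\sqrt k)$; when the loop terminates having failed to find anything smaller, $y$ is the true minimizer with constant probability. (5) Boosting: run the whole procedure $O(\log(1/\delta))$ independent times and take the index with the smallest value; since the minimum is always a lower bound, the best of the runs succeeds except with probability $\delta$. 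Also convert "expected $O(\sqrt k)$ queries" to a worst-case $\tilde O(\sqrt k\log(1/\delta))$ bound by truncating at a constant multiple of the expectation (Markov) and folding the failure into $\delta$. Finally, account for the $\tilde O(\cdot)$ hiding the gate overhead of the comparator and the $\log k$ factors from register sizes.

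\textbf{Main obstacle.} The delicate point is step (4): showing the \emph{expected} query count is $O(\sqrt k)$ rather than $O(\sqrt k\log k)$, which requires the slick probabilistic argument that each element becomes the running champion with probability inversely proportional to its rank, together with care that BBHT works with an unknown count and that a "failed" search (genuinely no smaller element exists) is distinguished from a "failed" search (bad luck) only probabilistically — so the termination test itself must be repeated enough times, and this is where a naive analysis would leak an extra logarithmic factor. Handling the interplay between the random restart, the per-iteration constant failure probability, and the final amplification to $1-\delta$ cleanly — without double-counting log factors — is the part I would be most careful about; everything else (oracle construction, Grover iteration count, arithmetic model assumptions) is routine given the results and model already set up in the paper.

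Since this is a known result quoted as a \texttt{fact}, an alternative is simply to cite~\cite{durr1996quantum} and remark that the stated complexity follows from wrapping Grover search in the exponential-search and threshold-update framework; I would include the sketch above as the justification.
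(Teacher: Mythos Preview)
Your sketch is correct and follows exactly the standard D\"urr--H\o yer argument. However, the paper does not prove this statement at all: it is stated as a \texttt{Fact} with a bare citation to~\cite{durr1996quantum} and no accompanying proof or sketch. So there is nothing to compare against beyond noting that your final paragraph --- simply citing the reference --- is precisely what the paper does, and your preceding sketch is a faithful (and accurate) expansion of what that citation contains.
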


Quantum multivariate mean estimation allows one to estimate the mean of a multivariate random variable. We rephrase the following result from~\cite{cornelissen2022near} in terms of $\ell_2$-norm accuracy. 
\begin{fact}[\cite{zhong2023provably, cornelissen2022near}]\label{fact:QME}
    Let $X:\Omega\rightarrow\mathbb R^k$ be a $k$-dimensional bounded variable on the probability space $(\Omega, p)$ such that $\left\Vert X\right\Vert_2\leq L$ for some constant $L$. Assume access to a probability oracle $U_p:\ket{0}\rightarrow \displaystyle\sum_{\omega\in\Omega}\sqrt p(\omega)\ket{\omega}\ket{\psi_\omega}$ for some ancilla quantum states $\left\{\psi_\omega\right\}_{\omega\in\Omega}$ and a binary oracle $U_X:\ket{\omega}\ket{0}\rightarrow \ket{\omega}\ket{X(\omega)}$ for all $\omega\in\Omega$. Let $\epsilon > 0$ and $\delta\in (0, 1)$. There exists a quantum algorithm that outputs a estimate $\tilde\mu$ of $\mu = \mathbb E[X]$ such that $\left\Vert\tilde\mu\right\Vert_2\leq L$ and $\left\Vert \tilde\mu - \mu\right\Vert_2\leq\epsilon$ with probability at least $1-\delta$ using $O\left(\frac{L\sqrt k}{\epsilon}\log\frac{k}{\delta}\right)$ queries. 
\end{fact}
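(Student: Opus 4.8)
The plan is to obtain Fact~\ref{fact:QME} as a bounded-variable specialization of the multivariate quantum mean estimator of~\cite{cornelissen2022near}. Their native guarantee bounds the $\ell_2$ error of the returned estimate in terms of the covariance of $X$: to reach $\ell_2$ error at most $\epsilon$ with probability $1-\delta$ it uses $\tilde O\!\big(\sqrt k\,\sqrt{\operatorname{tr}\Sigma}\,/\,\epsilon\big)$ accesses to the probability oracle $U_p$ and the binary oracle $U_X$, where $\Sigma$ is the covariance matrix of $X$ and the hidden logarithm is $\operatorname{polylog}(k/\delta)$. The only work is to convert the covariance-dependent cost into one depending on the uniform bound $L$, and to guarantee that the returned vector itself lies in the ball of radius $L$.

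First I would bound the covariance. Since the variance of each coordinate is at most its second moment, $\operatorname{tr}\Sigma=\sum_{i=1}^k\operatorname{Var}(X_i)\le\sum_{i=1}^k\mathbb E[X_i^2]=\mathbb E\big[\lVert X\rVert_2^2\big]\le L^2$, so $\sqrt{\operatorname{tr}\Sigma}\le L$ and the query count becomes $O\!\big(\tfrac{L\sqrt k}{\epsilon}\log\tfrac{k}{\delta}\big)$ once the logarithm is made explicit; if the cited theorem is stated only for a constant success probability, the $\log(1/\delta)$ part is recovered by running $O(\log(1/\delta))$ independent copies and returning their geometric median, whose $\ell_2$ distance to $\mu$ is $O(\epsilon)$ except with probability $\delta$ by the standard robustness of the geometric median, after rescaling $\epsilon$ by a constant. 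Second, to enforce $\lVert\tilde\mu\rVert_2\le L$, note that by Jensen's inequality $\lVert\mu\rVert_2=\lVert\mathbb E[X]\rVert_2\le\mathbb E\lVert X\rVert_2\le L$, so $\mu$ lies in the closed Euclidean ball $B_L$ of radius $L$; replacing $\tilde\mu$ by its metric projection $\Pi_{B_L}(\tilde\mu)$ --- a constant-time arithmetic step --- can only decrease its distance to any point of $B_L$, hence $\lVert\Pi_{B_L}(\tilde\mu)-\mu\rVert_2\le\lVert\tilde\mu-\mu\rVert_2\le\epsilon$ while $\lVert\Pi_{B_L}(\tilde\mu)\rVert_2\le L$.

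I expect the main obstacle to be bookkeeping rather than mathematics: one must check that the accuracy notion controlled by~\cite{cornelissen2022near} is exactly the $\ell_2$ guarantee used here (they also state norm-agnostic and directional variants), that the oracle pair $(U_p,U_X)$ of Fact~\ref{fact:QME} is precisely the input their algorithm consumes, and that the amplification step is compatible with that model --- which it is, since the copies are independent and taking the median is classical post-processing that does not touch the oracles. If~\cite{zhong2023provably} has already performed this rephrasing, the proof collapses to quoting their statement together with the covariance bound and the projection step above.
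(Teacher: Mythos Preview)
The paper does not prove this statement; it is stated as a \emph{Fact} and simply cited from~\cite{zhong2023provably, cornelissen2022near} with no accompanying argument. Your proposal is therefore strictly more detailed than what the paper provides, and your derivation---bounding $\operatorname{tr}\Sigma\le L^2$ to specialize the covariance-based guarantee of~\cite{cornelissen2022near}, amplifying success probability via the geometric median, and projecting onto $B_L$ to enforce the norm bound---is a correct and standard way to obtain the stated form. You already anticipated the situation in your final sentence: the paper is effectively just quoting the rephrased version (presumably from~\cite{zhong2023provably}), so there is no proof in the paper to compare against.
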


The next result is a quantum analogue of Line~\ref{fact:SVM}. It achieves a quadratic improvement in the dimension of the data and the size of the dataset but has a worse dependence on the error while having the same guarantees as Fact~\ref{fact:SVM}. 
\begin{fact}[\cite{li2019sublinear}]\label{fact:QSVM}
    Given a data matrix $X\in\mathbb R^{n\times k}$. There exists a quantum algorithm that returns a succinct classical representation of a vector $\bar w\in\mathbb R^k$ such that 
    \begin{align}
        X_i\bar w\geq \max_{w}\min_{i'\in[n]}X_iw - \epsilon, \quad \forall i\in[n]
    \end{align}
    with probability at least 2/3, using $\tilde O\left(\frac{\sqrt n}{\epsilon^4} + \frac{\sqrt k}{\epsilon^8}\right)$ quantum gates. 
\end{fact}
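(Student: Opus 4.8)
The plan is to reconstruct the sublinear quantum classifier of~\cite{li2019sublinear}, which is a quantum speedup of the primal–dual sublinear algorithm of~\cite{clarkson2012sublinear} underlying Fact~\ref{fact:SVM}. First I would recast the optimization $\max_{\Vert w\Vert_2\le 1}\min_{i\in[n]} X_i w$ as a zero-sum game: a row player picks a distribution $p\in\Delta_n$ over the $n$ constraints and a column player picks $w$ in the unit Euclidean ball, with payoff $p^{\top} X w$. By the minimax theorem the game value equals the target quantity, and for fixed $p$ the column player's best response is the closed form $w(p)=X^{\top}p/\Vert X^{\top}p\Vert_2$. The classical algorithm runs $T=\tilde O(1/\epsilon^2)$ rounds of a Grigoriadis–Khachiyan / multiplicative-weights update on $p$: in round $t$ one (i) forms $w^{(t)}=w(p^{(t)})$, (ii) needs the payoff vector $(X_i w^{(t)})_{i\in[n]}$ for the weight update, and (iii) sets $p^{(t+1)}_i\propto p^{(t)}_i\exp(\eta\, X_i w^{(t)})$. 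The output $\bar w=\frac1T\sum_{t} w^{(t)}$ then satisfies the claimed $\epsilon$-approximate minimax inequality with constant probability once $T$ is large enough and each per-round estimate is accurate to $O(\epsilon)$.

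Next I would isolate the three ingredients that must be quantized. (a) Sampling a row $i$ from the Gibbs distribution $p^{(t)}_i\propto\exp(\eta\sum_{s<t} X_i w^{(s)})$: maintain the exponents in a dynamic data structure (in the spirit of the $\mathsf{KP}$-tree of Fact~\ref{fact:KP-tree}, but supporting online updates) that enables preparing $\sum_i\sqrt{p^{(t)}_i}\ket{i}$, from which amplitude amplification draws a sample in $\tilde O(\sqrt n)$ time. (b) Estimating the needed coordinates of $w(p)=X^{\top}p/\Vert X^{\top}p\Vert_2$ to additive error $O(\epsilon)$: $\ell_2$-norm and inner-product estimation via amplitude estimation do this in $\tilde O(\sqrt k)$ (resp.\ $\tilde O(\sqrt n)$) queries with a $1/\epsilon$ overhead per estimate. (c) Propagating these noisy quantities through the multiplicative-weights analysis so that the perturbed updates still converge to an $\epsilon$-approximate equilibrium; this uses the standard robustness of mirror descent to bounded-bias, bounded-variance gradient estimates, which is precisely what forces the per-step precision to scale polynomially in $\epsilon$ and inflates the query count. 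Balancing $T=\tilde O(1/\epsilon^2)$ rounds against a per-round cost of roughly $\tilde O(\sqrt n/\epsilon^2+\sqrt k/\epsilon^6)$ yields the stated $\tilde O(\sqrt n/\epsilon^4+\sqrt k/\epsilon^8)$ gate complexity, and the succinct classical description of $\bar w$ is just the list of $T$ sampled row indices together with their coefficients, from which any coordinate $\bar w(j)$ is recomputable on demand.

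The main obstacle — and the delicate part of~\cite{li2019sublinear} — is step (c): because each round feeds noisy inner-product estimates back into the weights, estimation errors can compound over the $\tilde O(1/\epsilon^2)$ rounds. Controlling this needs careful bias/variance bookkeeping, using (nearly) unbiased estimators with $O(1)$ variance together with a martingale concentration argument over the round-by-round errors, so that with probability $2/3$ the realized average payoff is within $\epsilon$ of the game value simultaneously for all rows $i\in[n]$. A secondary subtlety is the dynamically updated Gibbs sampler over the $n$ rows under quantum access: one must show it can be updated in $\operatorname{polylog}$ time per round while still supporting $\tilde O(\sqrt n)$-cost state preparation. Assembling these pieces reproduces the statement; since it is quoted verbatim from~\cite{li2019sublinear}, in the present paper it suffices to cite that work, but the above is the route one would follow to verify it.
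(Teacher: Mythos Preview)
The paper does not prove this statement at all: it is stated as a \emph{Fact} and attributed directly to~\cite{li2019sublinear}, with no accompanying argument. Your proposal correctly identifies this and gives a reasonable high-level reconstruction of the Li--Chakrabarti--Wu argument (zero-sum game reformulation, multiplicative-weights updates on the row distribution, quantum Gibbs sampling and amplitude-estimation-based inner products, and a robustness analysis for the noisy updates), which is indeed the structure of the original proof; since the present paper simply cites the result, your final remark that ``it suffices to cite that work'' is exactly what the paper does.
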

 Lastly, the result below is a quantum analogue of Fact~\ref{fact:RL}. Using the same (generative) model, there is an quadratic improvement in the cardinality of the action space, error and the effective time horizon.  
\begin{fact}[\cite{wang2021quantum}]\label{fact:QRL}
    Let $\epsilon, \delta\in (0, 1)$. Let $\mathcal M = (\mathcal S, \mathcal A, R, P, \gamma)$ be a finite Markov decision process and let $\Phi\in[0, 1]^{SA\times k}$ be a features matrix. Given that $R = \Phi w$ for some $w\in\mathbb R^k$ such that $\Vert w\Vert_2\leq 1$ and assume access to $R$. Let $V^*$ and $V^{\tilde\pi}$ denote the value functions of the MDP when executing the optimal policy $\pi^*$ and $\epsilon$-optimal policy $\tilde\pi$ respectively. There exists a quantum algorithm that returns an $\epsilon$-optimal policy $\tilde\pi$ such that $V^*(s) - \epsilon\leq V^{\tilde\pi}(s)\leq V^*(s)$ for all $s\in\mathcal S$ with probability at least $1 - \delta$ using $\tilde O\left(\frac{S \sqrt A}{\epsilon(1 - \gamma)^{1.5}}\right)$ samples\footnote{The time complexity of the algorithm is the same as its sample complexity up to log factors, assuming that the generative model can be called in constant time and access to a QRAM.}. 
\end{fact}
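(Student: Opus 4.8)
The plan is to recapitulate the quantum speedup of model-based value iteration in the generative model, since the statement is quoted from~\cite{wang2021quantum}. Recall that the classical bound of Fact~\ref{fact:RL} comes from repeatedly applying a sampled Bellman operator: for each state-action pair $(s,a)$ one forms $\widehat{\mathcal T}V(s,a) = R(s,a) + \gamma\,\widehat{\mathbb E}_{s'\sim p(\cdot\mid s,a)}[V(s')]$ by averaging i.i.d.\ samples $s'\sim p(\cdot\mid s,a)$, updates $V(s)\leftarrow \max_{a}\widehat{\mathcal T}V(s,a)$, iterates $\tilde O(1/(1-\gamma))$ times, and outputs the greedy policy. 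The costs are: estimating each scalar expectation $\mathbb E_{s'}[V(s')]$, whose range is $O(1/(1-\gamma))$, to accuracy $\Theta(\epsilon(1-\gamma))$ so that errors telescope (classically $\tilde O(1/(\epsilon^2(1-\gamma)^2))$ samples per pair); a factor $SA$ for ranging over all pairs; and extra horizon factors that variance reduction shaves down.

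The first ingredient I would plug in is the scalar ($k=1$) case of the quantum mean estimator of Fact~\ref{fact:QME}: composing the transition oracle $\mathcal O_P$ — one call of which is exactly a quantum generative query — with a value oracle $\ket{s'}\mapsto\ket{s'}\ket{V(s')}$ built from the current value vector held in the QMD, one estimates $\mathbb E_{s'\sim p(\cdot\mid s,a)}[V(s')]$ to additive error $\Theta(\epsilon(1-\gamma))$ with only $\tilde O(1/\epsilon)$ quantum queries, since the estimand has magnitude $O(1/(1-\gamma))$; this quadratic improvement in accuracy is the source of the $\epsilon$-speedup and of part of the $(1-\gamma)$-speedup. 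The second ingredient is quantum maximum finding (Fact~\ref{fact:min_finding}): instead of scanning all $A$ actions, I would run maximum finding over $a$ with a comparison oracle that, on a queried $a$, calls the mean-estimation subroutine above, amplifying each $\widehat{\mathcal T}V(s,a)$ to success probability $1-1/\operatorname{poly}(SA/(1-\gamma))$ so the comparisons are effectively exact; this costs $\tilde O(\sqrt A)$ comparisons and hence $\tilde O(\sqrt A/\epsilon)$ quantum queries per state per round, producing the $\sqrt A$ factor.

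The step I expect to be the main obstacle is recovering the $(1-\gamma)^{-1.5}$ exponent rather than a naive $(1-\gamma)^{-2.5}$: iterating $\tilde O(1/(1-\gamma))$ times at per-round accuracy $\epsilon(1-\gamma)$ loses an extra $(1-\gamma)$, and, unlike the classical empirical mean, the quantum estimator is not exactly unbiased, so its bias must be kept from accumulating across rounds. I would import the quantum analogue of the variance-reduced, monotone value iteration behind~\cite{li2020breaking}: compute once a coarse reference value $\widetilde V$, then in each later round estimate only the correction $\mathbb E_{s'}[V(s')-\widetilde V(s')]$, whose smaller range makes a cheaper estimate suffice, and apply a one-sided downward perturbation to each estimate so the iterates stay below $V^\star$ and per-round errors form a convergent geometric series. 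To conclude, I would union-bound the failure probabilities of all estimator and maximum-finding calls into $\delta$ at the cost of logarithmic factors, invoke the standard value-iteration error analysis to get $\|V-V^\star\|_\infty = O(\epsilon)$ and hence an $\epsilon$-optimal greedy policy, and multiply the per-call cost $\tilde O(\sqrt A/\epsilon)$ by the number of state-action pairs and rounds, with the variance-reduction savings, to reach the claimed $\tilde O\!\left(S\sqrt A/(\epsilon(1-\gamma)^{1.5})\right)$ sample bound; the time bound matches up to logs since generative queries are constant-time and the QMD gives constant-time superposed access to $V$ and $\widetilde V$.
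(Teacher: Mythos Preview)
The paper does not prove this statement: it is stated as a \emph{Fact} imported verbatim from~\cite{wang2021quantum} and used as a black-box subroutine in Algorithm~\ref{QAL}, with no accompanying argument. There is therefore no ``paper's own proof'' to compare your proposal against.

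That said, your sketch is a reasonable outline of how the result in~\cite{wang2021quantum} is obtained: quantum mean estimation on top of the transition oracle to speed up Bellman backups, quantum maximum finding over actions to get the $\sqrt{A}$ factor, and a variance-reduced value-iteration scheme to recover the $(1-\gamma)^{-1.5}$ exponent. If your goal is simply to justify invoking the fact here, a citation suffices; if your goal is to reprove it, the sketch identifies the right ingredients but would need the detailed monotone/variance-reduction analysis from the cited work to be complete.
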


In the classical reinforcement learning environment, an agent interacts with the MDP by acting according to a policy $\pi$. Analogously, the quantum evaluation of a policy $\pi$ in quantum-accessible environments is given by the oracle $\mathcal O_{\pi}$ such that  
\begin{align*}
   \mathcal O_{\pi}:\ket{s}\ket{0}\rightarrow \sqrt{\pi(a\vert s)}\ket{s}\ket{a}. 
\end{align*}

Any policy that is classically computable can be efficiently converted into such a unitary~\citep{grover2002creating, jerbi2022quantum}. Moreover, a single call to oracles $\mathcal O_\pi$ takes constant time. 

We show that we can estimate feature expectations more efficiently than classical Monte Carlo. The following lemma combines ideas from classical sampling via quantum access (CSQA)~\citep{zhong2023provably} and quantum multivariate mean estimation~\citep{cornelissen2022near}. 

\begin{lemma}[Estimating feature expectations]\label{lem:QME}
    Let $H\in\mathbb Z_+$, $\epsilon>0, \delta\in (0, 1)$ and $\gamma\in [0, 1)$ be a discount factor. Given access to policy oracle $\mathcal O_{\pi}$, transition probability oracle $\mathcal O_{P}$ and a feature matrix oracle $\mathcal O_{\Phi}$, there exists a quantum algorithm that outputs an estimate $\tilde \mu$ of $\mu = \mathbb E\left[\displaystyle\sum_{t=0}^\infty \gamma^t \phi\left(s^{(t)}, a^{(t)}\right)\Big\vert \pi\right]$ such that $\left\Vert \tilde\mu - \mu\right\Vert_2\leq \epsilon$ and $\left\Vert \tilde\mu \right\Vert_2\leq \frac{1}{1 - \gamma}$ with success probability at least $1-\delta$ in time $\tilde O\left(\frac{\sqrt k}{\epsilon(1 - \gamma)^2}\log\frac{k}{\delta}\right)$. 
\end{lemma}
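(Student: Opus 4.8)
\textbf{Proof proposal for Lemma~\ref{lem:QME}.}

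The plan is to mirror the structure of the classical estimation result (Lemma~\ref{lem:ME}): first truncate the infinite-horizon sum to a finite horizon $H = \log_\gamma\left(\frac{\epsilon}{2}(1-\gamma)\right) - 1 = \tilde O\left(\frac{1}{1-\gamma}\right)$ using Lemma~\ref{lem:truncation}, so that it suffices to estimate $\mathbb E\left[\sum_{t=0}^H \gamma^t \phi\left(s^{(t)}, a^{(t)}\right)\bigm\vert \pi\right]$ to additive $\ell_2$-error $\frac{\epsilon}{2}$; the triangle inequality then gives the claimed $\epsilon$-accuracy, and the norm bound $\left\Vert\tilde\mu\right\Vert_2 \leq \frac{1}{1-\gamma}$ comes for free from the boundedness guarantee in Fact~\ref{fact:QME} together with the bound $\left\Vert\sum_{t=0}^H \gamma^t \phi(s^{(t)},a^{(t)})\right\Vert_2 \leq \frac{1}{1-\gamma}$ from Equation~\ref{eqn:first_part}. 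The substantive part is to build, from the given oracles $\mathcal O_\pi$, $\mathcal O_P$, $\mathcal O_\Phi$, the two oracles required by Fact~\ref{fact:QME}: a probability oracle $U_p$ preparing a superposition over length-$H$ trajectories weighted by their probabilities, and a binary oracle $U_X$ computing the truncated discounted feature sum along a given trajectory.

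First I would describe the trajectory-generation unitary: starting from an initial-state register prepared according to $\mathcal D$, alternately apply $\mathcal O_\pi$ (to produce $\sqrt{\pi(a^{(t)}\vert s^{(t)})}\ket{a^{(t)}}$ in superposition) and $\mathcal O_P$ (to produce $\sqrt{p(s^{(t+1)}\vert s^{(t)},a^{(t)})}\ket{s^{(t+1)}}$), for $t = 0, \dots, H$, accumulating all intermediate registers. This is exactly the classical-sampling-via-quantum-access construction of~\cite{zhong2023provably}: after $H+1$ rounds the state is $\sum_{\tau}\sqrt{\Pr[\tau\mid\pi]}\ket{\tau}$ over trajectories $\tau = (s^{(0)},a^{(0)},\dots,s^{(H)},a^{(H)})$, and measuring it reproduces one classical sample. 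This plays the role of $U_p$ with $\omega = \tau$. For $U_X$, I would apply $\mathcal O_\Phi$ to each $(s^{(t)},a^{(t)})$ pair in the trajectory register to fetch $\phi(s^{(t)},a^{(t)})$, then use the quantum arithmetic model to compute $\sum_{t=0}^H \gamma^t \phi(s^{(t)},a^{(t)})$ into an output register; this costs $O(H)$ queries to $\mathcal O_\Phi$ plus $\tilde O(H)$ arithmetic. Then I would invoke Fact~\ref{fact:QME} with $L = \frac{1}{1-\gamma}$ and target accuracy $\frac{\epsilon}{2}$, which uses $O\!\left(\frac{L\sqrt k}{\epsilon}\log\frac{k}{\delta}\right) = O\!\left(\frac{\sqrt k}{\epsilon(1-\gamma)}\log\frac{k}{\delta}\right)$ queries to $U_p$ and $U_X$.

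Finally I would assemble the running time: each query to $U_p$ costs $O(H) = \tilde O\left(\frac{1}{1-\gamma}\right)$ calls to $\mathcal O_\pi$ and $\mathcal O_P$, and each query to $U_X$ costs $O(H) = \tilde O\left(\frac{1}{1-\gamma}\right)$ calls to $\mathcal O_\Phi$ plus low-order arithmetic, so the total is
\begin{align*}
    \tilde O\!\left(\frac{1}{1-\gamma}\right)\cdot O\!\left(\frac{\sqrt k}{\epsilon(1-\gamma)}\log\frac{k}{\delta}\right) = \tilde O\!\left(\frac{\sqrt k}{\epsilon(1-\gamma)^2}\log\frac{k}{\delta}\right),
\end{align*}
matching the claim. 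The main obstacle I anticipate is not any single estimate but the careful bookkeeping in the reduction: verifying that the trajectory unitary genuinely realizes a probability oracle in the precise form Fact~\ref{fact:QME} demands (including the harmless ancilla states $\ket{\psi_\omega}$ absorbing all the intermediate registers), that the boundedness hypothesis $\left\Vert X\right\Vert_2 \leq L$ holds pathwise with $L = \frac{1}{1-\gamma}$, and that splitting the $\frac{\epsilon}{2}$ truncation error and the $\frac{\epsilon}{2}$ estimation error composes correctly under the triangle inequality; the query-count arithmetic itself is routine once the oracle reduction is pinned down.
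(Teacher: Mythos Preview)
Your proposal is correct and follows essentially the same approach as the paper: truncate via Lemma~\ref{lem:truncation}, build the trajectory superposition by alternating $\mathcal O_\pi$ and $\mathcal O_P$ (the CSQA construction), apply $\mathcal O_\Phi$ and arithmetic to compute the truncated discounted feature sum, then invoke Fact~\ref{fact:QME} with $L=\frac{1}{1-\gamma}$ and accuracy $\frac{\epsilon}{2}$, combining errors by the triangle inequality. Your framing in terms of explicitly realizing the $U_p$ and $U_X$ oracles required by Fact~\ref{fact:QME} is slightly more careful than the paper's presentation, but the argument and the running-time calculation are identical.
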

\begin{proof}
    We start by showing how to estimate $\mathbb E\left[\displaystyle\sum_{t=0}^H \gamma^t \phi\left(s^{(t)}, a^{(t)}\right)\right]$. We first prepare  
    \begin{align*}
        \ket{\psi^{(0)}}\coloneqq \displaystyle\sum_{s^{(0)}\in\mathcal S}\sqrt{p(s^{(0)})}\ket{s^{(0)}}\ket{\bar 0}. 
    \end{align*}
    Next, repeat the following for $t = 0, \cdots, H$:
    \begin{enumerate}
        \item Query $\mathcal O_{\pi}$ to obtain $\ket{\psi'}\coloneqq \mathcal O_{\pi}\ket{\psi^{(t)}}$. 
        \item Query $\mathcal O_{P}$ to obtain $\mathcal O_{P}\ket{\psi'}$ and collect the third register as $\ket{\psi^{(t+1)}}$. 
    \end{enumerate}
    The resulting state is 
    \begin{align*}
        \displaystyle\sum_{\begin{subarray}{l}s^{(0)}, \cdots,  s^{(H)}\in\mathcal S\\a^{(0)}, \cdots, a^{(H)}\in\mathcal A\end{subarray}}\sqrt{p\left(s^{(0)}, a^{(0)}, \cdots, s^{(H)}, a^{(H)}\right)}\ket{s^{(0)}, a^{(0)}}\cdots\ket{s^{(H)}, a^{(H)}}\ket{\bar 0}, 
    \end{align*}
    where $p\left(s^{(0)}, a^{(0)}\cdots, s^{(H)}, a^{(H)}\right)\coloneqq p\left(s^{(0)}\right) \Pi_{t=0}^{H-1}  p\left(s^{(t+1)}\vert s^{(t)}, a^{(t)}\right)\pi(a^{(t)}\vert s^{(t)})$ and $\ket{s^{(0)}, a^{(0)}}$ denotes $\ket{s^{(t)}}\ket{a^{(t)}}$ for $t\in\{0, \cdots, H\}$. For the rest of the proof, we will use $p$ to denote $p\left(s^{(0)}, a^{(0)}\cdots, s^{(H)}, a^{(H)}\right)$ for short. 
    
    Then, query $\mathcal O_{\Phi}$ on registers $2i-1$ and $2i$ for $i\in[H]$ to prepare the state  
    \begin{align*}
        \displaystyle\sum_{\begin{subarray}{l}s^{(0)}, \cdots, s^{(H)}\in\mathcal S\\a^{(0)}, \cdots, a^{(H)}\in\mathcal A\end{subarray}}\sqrt{p}\ket{s^{(0)}, a^{(0)}}\cdots\ket{s^{(H)}, a^{(H)}}\ket{\phi\left(s^{(0)}, a^{(0)}\right)}\cdots\ket{\phi\left(s^{(H)}, a^{(H)}\right)}\ket{\bar 0}. 
    \end{align*}
    Subsequently, we perform arithmetic computation to obtain 
    \begin{align*}
        \displaystyle\sum_{\begin{subarray}{l}s^{(0)}, \cdots , s^{(H)}\in\mathcal S\\a^{(0)}, \cdots, a^{(H)}\in\mathcal A\end{subarray}}\sqrt{p}\ket{s^{(0)}, a^{(0)}}\cdots\ket{s^{(H)}, a^{(H)}}\ket{\phi\left(s^{(0)}, a^{(0)}\right)}\cdots\ket{\phi\left(s^{(H)}, a^{(H)}\right)}\ket{\displaystyle\sum_{t=0}^H \gamma^t \phi\left(s^{(t)}, a^{(t)}\right)}. 
    \end{align*}
    Lastly, Fact~\ref{fact:QME} allows us to obtain an estimate $\tilde\mu$ of $\mu = \mathbb E\left[\displaystyle\sum_{t=0}^H \gamma^t \phi\left(s^{(t)}, a^{(t)}\right)\right]$ with additive error $\frac{\epsilon}{2}$ and success probability $1-\delta$. Therefore, by triangle inequality and Lemma~\ref{lem:truncation}, we obtain  
    \begin{align*}
         & \left\Vert \mathbb E\left[\displaystyle\sum_{t=0}^\infty \gamma^t \phi\left(s^{(t)}, a^{(t)}\right)\right] - \tilde\mu \right\Vert_2\\
         & \leq \left\Vert  \mathbb E\left[\displaystyle\sum_{t=0}^\infty \gamma^t \phi\left(s^{(t)}, a^{(t)}\right)\right] - \mathbb E\left[\displaystyle\sum_{t=0}^H \gamma^t \phi\left(s^{(t)}, a^{(t)}\right)\right]\right\Vert_2 + \left\Vert \mathbb E\left[\displaystyle\sum_{t=0}^H \gamma^t \phi\left(s^{(t)}, a^{(t)}\right)\right] - \tilde \mu\right\Vert_2 \\
         & \leq \frac{\epsilon}{2} + \frac{\epsilon}{2}\\
         & = \epsilon
    \end{align*}
    with success probability at least $1-\delta$. 
    
    We now analyze the time complexity. The policy oracle, transition probability oracle and feature matrix oracle are called at most $H = \log_\gamma \left(\frac{\epsilon}{2}(1 - \gamma)\right)-1 = \tilde O\left(\frac{1}{1 - \gamma}\right)$ times. By the same argument as Lemma~\ref{lem:ME},  the total running time is 
    \begin{align*}
        \tilde O\left(\frac{1}{1 - \gamma}\right) \cdot  O\left(\frac{\sqrt k}{\epsilon(1 - \gamma)}\log\frac{k}{\delta}\right) = \tilde O\left(\frac{\sqrt k}{\epsilon(1 - \gamma)^2}\log\frac{k}{\delta}\right). 
    \end{align*}
\end{proof}

\subsection{Quantum algorithm for apprenticeship learning}
Using the quantum subroutines and estimation of the feature vectors via CSQA from the previous subsection, we present our quantum algorithm for apprenticeship learning in Algorithm~\ref{QAL}. 
\begin{breakablealgorithm}
    \caption{Quantum apprenticeship learning algorithm}
    \label{QAL}
    \begin{algorithmic}[1]
    \Require Initial policy $\tilde\pi^{(0)}$, errors $\epsilon, \epsilon_{\operatorname{RL}}\in (0, 1)$ such that $\epsilon \geq \sqrt{\epsilon_{\operatorname{RL}}}$, failure probability $\delta\in (0, 1)$. 
    \Ensure $\left\{\tilde\pi^{(i)}: i = 0, \cdots, n\right\}$. 
    \State \label{line:Qupdate0} Compute $\hat\mu_E$ in Equation~\ref{eqn:expert_feature_expectation} using Lemma~\ref{lem:QME} and update $\Phi'(1) = \hat\mu_E$. 
    \State \label{line:QME1} Obtain an estimate $\mu_{q}'^{(0)}$ of $\mu'^{(0)} \coloneqq \mu\left(\tilde\pi^{(0)}\right)$ with additive error $\frac{\epsilon}{3}$ and success probability $1 - \frac{\delta}{4n}$ using~ Lemma\ref{lem:QME}. 
    \State \label{line:Qupdate1} Update $\Phi'(2)$ with $\hat \mu_E - \mu_q'^{(0)}$. 
    \State Set $i = 1$. 
    \State \label{line:QSVM} Obtain an estimate $\bar w^{(i)}$ of $w^{(i)}$ using~ Fact\ref{fact:QSVM} with error $\frac{\epsilon}{3}$ and success probability $1 - \frac{\delta}{3n}$. 
    \State \label{line:QMinFind} Find $i_{\min} = \displaystyle\argmin_{j\in\{0, \cdots, (i-1)\}} \left\Vert \hat\mu_E - \mu_q'^{(j)}\right\Vert_2$ using~ Fact\ref{fact:min_finding} with success probability $1 - \frac{\delta}{3n}$
    \State \label{line:Qcheck} If $\tilde t^{(i_{\min})} = \left\Vert \hat\mu_E - \mu_q'^{(i_{\min})}\right\Vert_2\leq \epsilon + \frac{2\epsilon}{3} + \rho$ where $\rho=\frac{\epsilon}{3}$, then terminate and set $n = i$. 
    \State \label{line:QRL} Obtain an $\epsilon_{\operatorname{RL}}$-optimal policy $\tilde\pi^{(i)}$ using $\bar w^{(i)}$ and Fact~\ref{fact:QRL} with success probability $1 - \frac{\delta}{3n}$.
    \State \label{line:QME2} Obtain an estimate $\mu_{q}'^{(i)}$ of $\mu'^{(i)} = \mu\left(\tilde \pi^{(i)}\right)$ with additive error $\frac{\epsilon}{3}$ and success probability $1 - \frac{\delta}{3n}$ using Fact~\ref{lem:QME}. 
    \State \label{line:Qupdate2} Update $\Phi'(i+2)$ with $\hat \mu_E - \mu_q'^{(i)}$. 
    \State Set $i = i+1$ and go to Line~\ref{line:QSVM}. 
\end{algorithmic}
\end{breakablealgorithm}

The convergence of Algorithm~\ref{QAL} is the same as that of Algorithm~\ref{algo} since both these algorithms have the same guarantees. The following theorem shows that Algorithm~\ref{QAL} is quadratically faster than Algorithm~\ref{algo} in terms of $k, A$ and $(1 - \gamma)$. 
\begin{theorem}
    Let $\epsilon, \epsilon_{\operatorname{RL}}\in (0, 1)$ such that $\epsilon\geq \sqrt{\epsilon_{\operatorname{RL}}}$ and $\gamma\in [0, 1)$. Then, a single iteration of Algorithm~\ref{QAL} runs in time 
    \begin{align*}
        \tilde O\left(\frac{\sqrt k + S\sqrt A}{(1 - \gamma)^{16}\epsilon^{24}(\epsilon^2 - \epsilon_{\operatorname{RL}})^{0.5}}\right). 
    \end{align*}
\end{theorem}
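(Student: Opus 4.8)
The plan is to replay the proof of the classical per-iteration bound, substituting each classical subroutine by its quantum analogue and tracking how the internal accuracies propagate. As in the classical analysis, the value-function perturbation lemma forces the SVM and mean-estimation accuracies used within an iteration to be $\epsilon_{\operatorname{SVM}}=\epsilon_{\operatorname{ME}}=\Theta((1-\gamma)^2\epsilon^3)$; this lemma applies verbatim in the quantum setting because the quantum SVM of Fact~\ref{fact:QSVM} returns a $\bar w$ satisfying exactly the same approximation guarantee (Equation~\ref{eqn:barw}) as the classical solver of Fact~\ref{fact:SVM}, so Lemma~\ref{lem:barw_w} and the resulting bound $\Vert V^{\pi}_{\bar R}-V^{\pi}_{R}\Vert_\infty\le\epsilon$ are unchanged. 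Hence the convergence guarantee of Theorem~\ref{thm:iterations} still applies with the same iteration count $n=\tilde O\!\left(\frac{k}{(1-\gamma)^2(\epsilon^2-\epsilon_{\operatorname{RL}})}\right)$, and only the per-line costs differ.

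First I would bound each line of Algorithm~\ref{QAL}. The feature-expectation estimations (Lines~\ref{line:Qupdate0},~\ref{line:QME1},~\ref{line:QME2}) use Lemma~\ref{lem:QME} with additive error $\Theta((1-\gamma)^2\epsilon^3)$, costing $\tilde O\!\left(\frac{\sqrt k}{(1-\gamma)^4\epsilon^3}\right)$ each. Line~\ref{line:QSVM} runs the quantum SVM (Fact~\ref{fact:QSVM}) on the $(n+2)\times k$ matrix $\Phi'$ with error $\epsilon_{\operatorname{SVM}}=\Theta((1-\gamma)^2\epsilon^3)$, costing $\tilde O\!\left(\frac{\sqrt n}{\epsilon_{\operatorname{SVM}}^4}+\frac{\sqrt k}{\epsilon_{\operatorname{SVM}}^8}\right)=\tilde O\!\left(\frac{\sqrt n}{(1-\gamma)^8\epsilon^{12}}+\frac{\sqrt k}{(1-\gamma)^{16}\epsilon^{24}}\right)$; this is the term responsible for the $\epsilon^{24}$ and $(1-\gamma)^{16}$ in the final bound. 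Line~\ref{line:QMinFind} applies quantum minimum finding (Fact~\ref{fact:min_finding}) over the $\le n$ distances $\Vert\hat\mu_E-\mu_q'^{(j)}\Vert_2$, each of which is read off the $\mathsf{KP}_{\Phi'}$ trees in $\tilde O(1)$ time using the norm queries of Fact~\ref{fact:KP-tree}, for a total of $\tilde O(\sqrt n)$. Line~\ref{line:QRL} runs the quantum RL algorithm (Fact~\ref{fact:QRL}) with error $\Theta(\epsilon)$, costing $\tilde O\!\left(\frac{S\sqrt A}{\epsilon(1-\gamma)^{1.5}}\right)$; the remaining lines (building and updating the $\mathsf{KP}$-trees, the termination check) contribute only lower-order $\tilde O(k)$ overhead.

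Finally I would substitute $n=\tilde O\!\left(\frac{k}{(1-\gamma)^2(\epsilon^2-\epsilon_{\operatorname{RL}})}\right)$, so that $\sqrt n=\tilde O\!\left(\frac{\sqrt k}{(1-\gamma)(\epsilon^2-\epsilon_{\operatorname{RL}})^{1/2}}\right)$, and add up and bound the per-line costs. The two $\sqrt k$-bearing terms $\tilde O\!\left(\frac{\sqrt k}{(1-\gamma)^{16}\epsilon^{24}}\right)$ and $\tilde O\!\left(\frac{\sqrt k}{(1-\gamma)^9\epsilon^{12}(\epsilon^2-\epsilon_{\operatorname{RL}})^{1/2}}\right)$ are both dominated by $\tilde O\!\left(\frac{\sqrt k}{(1-\gamma)^{16}\epsilon^{24}(\epsilon^2-\epsilon_{\operatorname{RL}})^{1/2}}\right)$ (using $(\epsilon^2-\epsilon_{\operatorname{RL}})^{1/2}\le 1$, $(1-\gamma)<1$, $\epsilon<1$), the RL term is dominated by $\tilde O\!\left(\frac{S\sqrt A}{(1-\gamma)^{16}\epsilon^{24}(\epsilon^2-\epsilon_{\operatorname{RL}})^{1/2}}\right)$ for the same reasons, and the $\tilde O\!\left(\frac{\sqrt k}{(1-\gamma)^4\epsilon^3}\right)$ and $\tilde O(k)$ contributions are negligible against these; summing gives the claimed $\tilde O\!\left(\frac{\sqrt k+S\sqrt A}{(1-\gamma)^{16}\epsilon^{24}(\epsilon^2-\epsilon_{\operatorname{RL}})^{0.5}}\right)$. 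I expect the main difficulty to be purely bookkeeping: carefully pushing the internal accuracy $\Theta((1-\gamma)^2\epsilon^3)$ through the degree-$8$ error dependence of the quantum SVM — which is what inflates the exponents from the classical $(1-\gamma)^7\epsilon^6$ to $(1-\gamma)^{16}\epsilon^{24}$ — and then checking that every other contribution, in particular the quantum SVM's $\sqrt n$-term after plugging in the iteration count and the quantum RL term, is genuinely subsumed, so the final envelope has exactly the stated form; a minor point is to confirm that the distances passed to quantum minimum finding can be evaluated via the $\mathsf{KP}$-tree norm queries, keeping that step at $\tilde O(\sqrt n)$ rather than $\tilde O(\sqrt n\,k)$.
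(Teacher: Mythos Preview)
Your proposal is correct and follows essentially the same approach as the paper: bound each line of Algorithm~\ref{QAL} using the quantum subroutine costs with internal accuracy $\Theta((1-\gamma)^2\epsilon^3)$, substitute $n=\tilde O\!\left(\frac{k}{(1-\gamma)^2(\epsilon^2-\epsilon_{\operatorname{RL}})}\right)$ from Theorem~\ref{thm:iterations}, and absorb all terms into the stated envelope. Your write-up is in fact more explicit than the paper's own proof, spelling out why the internal SVM/ME accuracies must be $\Theta((1-\gamma)^2\epsilon^3)$, how the $\mathsf{KP}$-tree norm queries keep Line~\ref{line:QMinFind} at $\tilde O(\sqrt n)$, and verifying that the $\sqrt n$-term of the quantum SVM is genuinely subsumed after substitution.
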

\begin{proof}
By Lemma~\ref{lem:QME}, Lines\ref{line:Qupdate0},\ref{line:QME1} and~\ref{line:QME2} can be implemented in $\tilde O\left(\frac{\sqrt k}{\epsilon^3(1 - \gamma)^4}\right)$ time. Line~\ref{line:QSVM} runs in $\tilde O\left(\frac{\sqrt n}{\epsilon^12(1 - \gamma)^8} + \frac{\sqrt k}{\epsilon^{24} (1 - \gamma)^{16}}\right)$ time while Line~\ref{line:QMinFind} can be implemented in time $\tilde O\left(\sqrt n\right)$. Checking Line~\ref{line:Qcheck} takes $O(1)$ time. Lastly, Line~\ref{line:QRL} can be implemented in time $\tilde O\left(\frac{S\sqrt A}{\epsilon(1 - \gamma)}\right)$. Therefore, the total time complexity for one iteration of Algorithm~\ref{QAL} is \begin{align*}
        \tilde O\left(\frac{\sqrt k}{\epsilon^3(1 - \gamma)^4}\right) + O\left(\frac{\sqrt n}{\epsilon^12(1 - \gamma)^8} + \frac{\sqrt k}{\epsilon^{24} (1 - \gamma)^{16}}\right) + \tilde O\left(\sqrt n\right) + \tilde O\left(\frac{S\sqrt A}{\epsilon(1 - \gamma)^{1.5}}\right). 
    \end{align*}
    By Theorem~\ref{thm:iterations}, Algorithm~\ref{QAL} terminates after $n = \tilde O\left(\frac{k}{(1 - \gamma)^2(\epsilon^2 - \epsilon_{\operatorname{RL}}) }\right)$ iterations. This brings the total per-iteration time complexity of Algorithm~\ref{QAL} to
    \begin{align*}
        \tilde O\left(\frac{\sqrt k + S\sqrt A}{(1 - \gamma)^{16}\epsilon^{24}(\epsilon^2 - \epsilon_{\operatorname{RL}})^{0.5}}\right). 
    \end{align*}
\end{proof}

\section{Conclusion}\label{sec:conclusion}
Based on the work of~\cite{abbeel2004apprenticeship}, we give a quantum algorithm for apprenticeship learning using inverse reinforcement learning. This apprenticeship learning framework of~\cite{abbeel2004apprenticeship} conveniently allows us to  use existing quantum subroutines for different parts of the algorithm, i.e. multivariate mean estimation, SVM solver, minimum finding and a reinforcement learning algorithm. As an intermediate step, we present a classical approximate apprenticeship learning algorithm using corresponding classical counterparts of the quantum subroutines. This is so to compare the speedup to which our quantum algorithm could achieve. 

We analyze the convergence of the apprenticeship learning algorithm by~\cite{abbeel2004apprenticeship} when a reinforcement learning subroutine that outputs nearly-optimal polices is used. We also perform time complexity analysis for both our classical approximate and quantum algorithms. Our results show that, as compared to the classical approximate algorithm, the quantum algorithm obtains a quadratic speedup in the dimension $k$ of the feature vectors and the size of the action space $A$, but suffers a worse error dependence $\epsilon$ and the effective time horizon $(1 - \gamma)$ due to the use of quantum SVM solver subroutine and the tuning of error to achieve convergence. A similar phenomenon in which the time complexity is quadratically improved in terms of  certain parameters while the dependence on other parameters suffers a worse scaling has occurred in the context of quantum algorithms for semidefinite programming (SDP)~\citep{brandao2016quantum}. In their work, the quantum algorithm for SDP achieve a quadratic speedup in the problem dimension but has a worse dependence on other parameters. Nevertheless, both our algorithms have the same convergence guarantees. Specifically, we proved that the algorithms terminate after $ \tilde O\left(\frac{k}{(1 - \gamma)^2 (\epsilon^2 - \epsilon_{\operatorname{RL}})} \right)$ iterations. 

Assuming the use of QMD, the depth complexity of our quantum algorithm is the same as its time complexity, up to logarithmic factors. The QMD is a newly introduced formalization of data access models in the literature. It is a generalization of the popular QRAM and QRAG, whose limitations and potentials in real-world implementations have been studied by~\cite{clader2023quantum, giovannetti2008quantum, giovannetti2008architectures, di2020fault, hann2021practicality, phalak2023quantum, jaques2023qram, allcock2023constant, luongo2022quantum} and~\cite{buhrman2022memory, luongo2022quantum, allcock2023constant} respectively. The idea of using a QMD is so that we can delegate the resources to the implementation of the data access via QMD. This computational model can also be augmented with algorithms for efficiently moving and addressing qubits within physically realistic devices~\citep{beals2013efficient}.

In our work, we assume that the reward function is expressible as a linear function of known features. One possible future direction would be to consider apprenticeship learning in a setting where the reward function is a nonlinear function of feature vectors. It would also be interesting to apply our quantum algorithm as a subroutine to solve learning problems such as the Hamiltonian learning problem, where known results in the literature could be used as expert's policy. By cleverly mapping parameters of the learning problem to MDP parameters, one could design a quantum apprenticeship learning algorithm for learning quantum systems.


\backmatter

\bmhead{Acknowledgements}
DL specially thanks Pieter Abbeel and Andrew Y. Ng for providing supplementary material for the paper ``Apprenticeship Learning via Inverse Reinforcement Learning". DL thanks Patrick Rebentrost and Alessandro Luongo for helpful discussions and  pointing out References~\cite{li2019sublinear} and~\cite{allcock2023constant} respectively. AA and DL gratefully acknowledge funding from the Latvian Quantum Initiative under EU Recovery and Resilience Facility under project no.\ 2.3.1.1.i.0/1/22/I/CFLA/001.

\bibliography{sn-bibliography}





a

\end{document}